\documentclass{article}

\usepackage{arxiv}
\usepackage[utf8]{inputenc}
\usepackage[T1]{fontenc}
\usepackage{booktabs}
\usepackage{amsfonts}
\usepackage{nicefrac}
\usepackage{microtype}
\usepackage{lipsum}
\usepackage{graphicx}           
\usepackage{xcolor}
\usepackage{amsthm}
\usepackage{mathtools}
\usepackage{wrapfig}
\usepackage{caption}
\usepackage{natbib}
\usepackage{ulem}
\usepackage{rotating}           
\usepackage{adjustbox}
\usepackage{comment}
\usepackage{algorithm}
\usepackage{algpseudocode}
\usepackage{xurl}               
\usepackage[                    
  colorlinks = true,
  urlcolor   = blue,
  linkcolor  = black,
  citecolor  = black
]{hyperref}

\graphicspath{{./images/}}

\newtheorem{theorem}{Theorem}

\newtheorem{rem}{Remark}

\newcommand{\de}{\,\mathrm{d}}

 \title{\bf Change-point analysis: a new perspective for unstable financial markets}
  \author{\href{https://orcid.org/0000-0002-1964-7539}{\includegraphics[scale=0.1]{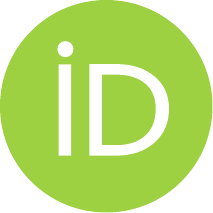}}\hspace{1mm} \v Zikica Luki\' c
    \hspace{.2cm}\\
    Independent Researcher\\
    Present address: Biometrics FSP, Parexel International
    \\
    \texttt{zikicamaster@gmail.com}
    \And
    \href{https://orcid.org/0000-0001-8243-9794}{\includegraphics[scale=0.1]{orcid.pdf}}\hspace{1mm}Bojana Milo\v sevi\' c\thanks{
    The work of B. Milo\v sevi\' c is supported by the Ministry of Science, Technological Development and Innovations of the Republic of Serbia (the contract 451-03-136/2025-03/200104).The results are also obtained upon work from COST Action HiTEc-Text, functional and other high-dimensional data in econometrics: New models, methods, applications, CA21163, supported by COST (European Cooperation in Science and Technology)}\hspace{.2cm}  \\
    Faculty of Mathematics, University of Belgrade\\
    \texttt{bojana@matf.bg.ac.rs}}

\begin{document}
\maketitle
\begin{abstract}
We introduce two new classes of nonparametric change-point tests for sequences of univariate non-negative random variables. The proposed procedures are based on the empirical modified Hankel transform and the Laplace transform, respectively, and provide new transform-based tools for detecting distributional changes. We derive the asymptotic null distributions of the corresponding test statistics and use a permutation bootstrap procedure to obtain $p$-values, since the limiting null distributions are not distribution-free. Through a finite-sample simulation study, we show that the proposed tests are well calibrated, generally more powerful than empirical-characteristic-function-based competitors, and capable of accurately estimating the change-point location, particularly in asymmetric settings. The usefulness of the proposed methodology is further demonstrated through applications to Argentina rainfall data, as well as U.S. GNP and S\&P 500 absolute log-returns. The real-data examples illustrate that the proposed Hankel- and Laplace-transform-based tests are effective tools for detecting meaningful distributional changes in non-negative data.
\end{abstract}

\keywords{Hankel transform \and Laplace transform \and distributional shift \and permutation bootstrap}
Change-point analysis has attracted significant attention in recent years due to its wide range of applications, particularly in fields such as medicine \cite{ hall2000change, minard2022change, yu2026change}, oceanography \cite{killick2012optimal}, climatology \cite{lund2023good}, and finance \cite{kim2022unsupervised, habibi2021bayesian, thies2018bayesian}. {In economics, change-point methods can be used to reduce the incidence of bad bets in momentum-trading strategies \cite{wood2022slowmomentum}. Moreover, change-point analysis has been applied to U.S. GNP data in \cite{wu2025trend}, while numerous studies have examined the influence of structural breaks on monetary policy (see, e.g., \cite{hauzenberger2022fast, koop2012forecasting}). }

Change-point analysis in statistics is typically approached through two main frameworks: parametric and nonparametric methodologies. Parametric methods assume a specific model and detect change-points based on changes in the model parameters \cite{bai1997estimation, goodman2006survival, gupta2001change}. By contrast, nonparametric methods do not rely on underlying assumptions about the model. These methods mainly focus on detecting change-points in the distribution of the observations \cite{austin2023online, carlstein1988nonparametric, matteson2014nonparametric}.

A commonly used strategy in nonparametric change-point inference involves computing the maximum of the integrated difference between suitably chosen integral transforms applied to the first $k$ and the remaining $n-k$ observations. This maximum is used both as a test statistic and as an estimator of the change-point location. The methodology was first introduced using the characteristic function in \cite{huvskova2006change} and was later adapted to the matrix-variate setting based on orthogonal Hankel transforms in \cite{nas2}. Recent methodological advances in univariate change-point detection include \cite{boniece2025sequential, hlavka2026change, ratnasingam2026single}, among others.

In this paper, we extend this general framework to sequences of univariate non-negative random variables by developing two novel classes of change-point test statistics based on the empirical modified Hankel transform introduced in \cite{baringhaus2015two, baringhaus2010empirical} and the Laplace transform. Although both transforms are well known and characterize the distributions of non-negative random variables, they have not previously been explored in this context, and they lead to new families of test statistics with favorable theoretical and empirical properties. In particular, the matrix-based statistic proposed in \cite{nas2} is based on orthogonal Hankel transforms and cannot be viewed as a generalization of the univariate Hankel-transform-based statistic developed here. Thus, our contribution is both conceptually and technically distinct and enriches the set of available nonparametric tools for change-point detection.

In Section \ref{sec::teststat}, the test statistics are presented together with their asymptotic properties. Section \ref{sec::power} presents the power study, while Section \ref{sec::realdata} provides real-data examples that illustrate applications of the proposed methodology. { All tables are provided in Appendix \ref{appendixA}.}

\section{Test statistics}\label{sec::teststat}

Let $ X_1, X_2, \dots, X_n $ be independent non-negative random variables, where each  $X_j$ has a distribution function $F_j$, for $ j \in \{1, 2, \dots, n\} $.
 We consider testing $H_0: F_1=F_2=\dots = F_n$ against the alternative $H_1: F_1=F_2=\dots = F_{k}\neq F_{k+1}=F_{{k+2}}=\dots=F_n$, where $k, F_1$, and $F_n$ are unknown. We consider the following two classes of test statistics, motivated by the two-sample test proposed  in \cite{baringhaus2015two} and the general approach in \cite{huvskova2006change}:
\begin{align}
    \label{hankelstat}\mathcal{J}_{n, \gamma,a}& = \max_{1\leq k < n}\Big[ \Big(\frac{k(n-k)}{n^2}\Big)^\gamma \frac{k(n-k)}{n}\int\limits_0^\infty (\mathcal{H}_{X}(t)-\mathcal{H}^0_{X}(t))^2\exp(-at)\de t\Big], \; a > 0; \\
      \label{laplacestat}\mathcal{L}_{n, \gamma, a}& = \max_{1\leq k < n} \Big[\Big(\frac{k(n-k)}{n^2}\Big)^\gamma \frac{k(n-k)}{n}\int\limits_0^\infty (\mathcal{L}_{X}(t)-\mathcal{L}^0_{X}(t))^2\exp(-at)\de t\Big], \;a>0,
\end{align}
where 
\begin{align*}
 \mathcal{H}_{X}(t)=\frac{1}{k}\sum_{j=1}^k J_0(2\sqrt{tX_j}),\quad  & \mathcal{L}_{X}(t)=\frac{1}{k}\sum_{j=1}^k\exp(-tX_{j}) \\
 \mathcal{H}^{0}_{X}(t)=\frac{1}{n-k}\sum^n_{j=k+1} J_0(2\sqrt{tX_j}),\quad & \mathcal{L}^{0}_{X}(t)=\frac{1}{n-k}\sum^n_{j=k+1}\exp(-tX_{j})\\
\end{align*}
are the empirical Hankel and Laplace  transforms of $X_1, X_2, \dots, X_k$, and  of $X_{k+1}, X_{k+2}, \dots, X_n$, respectively. Here, $J_0$ is the Bessel function of the first kind of order zero.

\begin{rem}
The approach proposed in \eqref{hankelstat} and \eqref{laplacestat} can be extended by considering alternative weight functions beyond \( w(t) = \exp(-at) \). 
The choice of weight function influences the resulting test statistics and may affect their finite-sample performance. In this work, we focus on the exponential weight function because it yields closed-form expressions for the integrals involved, leading to greater analytical tractability and computational efficiency.
\end{rem}

Using results from \cite{baringhaus2015two} and \cite[Eq. 6.615]{gradshteyn1988tables}, we obtain the following closed-form expression for  $ \mathcal{J}_{n, \gamma,a} $:
\begin{align*}
          \mathcal{J}_{n, \gamma,a} = &\max_{1\leq k < n} \Big[\Big(\frac{k(n-k)}{n^2}\Big)^\gamma \frac{k(n-k)}{na}\Big(\frac{1}{k^2}\sum\limits_{l=1}^k\sum\limits_{m=1}^k I_0\left(\frac{2\sqrt{X_lX_m}}{a}\right)\exp\left(\frac{-(X_l+X_m)}{a}\right)+\\&\frac{1}{(n-k)^2}\sum\limits_{l=k+1}^n\sum\limits_{m=k+1}^n I_0\left(\frac{2\sqrt{X_lX_m}}{a}\right)\exp\left(\frac{-(X_l+X_m)}{a} \right)-\\&\frac{2}{k(n-k)}\sum\limits_{l=1}^k\sum\limits_{m=k+1}^n I_0\left(\frac{2\sqrt{X_lX_m}}{a}\right)\exp\left(\frac{-(X_l+X_m)}{a}\right) \Big)\Big],
\end{align*}
where $I_0$ is the modified Bessel function of the first kind of order $0$. Direct computations yield the following form for $\mathcal{L}_{n, \gamma, a}$:
\begin{align*}
          \mathcal{L}_{n, \gamma, a} \!=\! &\max_{1\leq k < n} \Big[\Big(\frac{k(n-k)}{n^2}\Big)^\gamma \frac{k(n-k)}{n}\Big(\frac{1}{k^2}\sum\limits_{l=1}^k\sum\limits_{m=1}^k \frac{1}{X_l+X_m+a}\!+\!\frac{1}{(n-k)^2}\sum\limits_{l=k+1}^n\sum\limits_{m=k+1}^n \frac{1}{X_l+X_m+a}\\&\!-\frac{2}{k(n-k)}\sum\limits_{l=1}^k\sum\limits_{m=k+1}^n \frac{1}{X_l+X_m+a} \Big)\Big].
\end{align*}

The following theorem establishes the asymptotic behavior of the test statistics under the null hypothesis. The proof follows the methodology outlined in \cite{huvskova2006change} and \cite{nas2}. For the general case of an arbitrary integral transform, the proof is provided in the Appendix \ref{appendixB}.



Let $q_{\mathcal{J}}(x, y;a) = {\frac{1}{a}}I_0\big(\frac{2\sqrt{xy}}{a}\big)\exp(-\frac{x+y}{a})$ and $q_{\mathcal{L}}(x, y;a) = \frac{1}{x+y+a}$. Let us define $\Tilde{q}_{I} (x, y;a)= q_{I}(x, y;a)- Eq_{I}(x, X_s;a)-Eq_{I}(X_r, y;a)+Eq_{I}(X_r, X_s;a),\;I\in \{\mathcal{J},\mathcal{L}\},$ for $r\neq s$. {For $I\in\{\mathcal J,\mathcal L\}$, let $(\lambda_j^I)_{j\geq1}$ denote the nonnegative eigenvalues, repeated according to multiplicity and arranged in non-increasing order, of the integral operator on $L^2(F)$ with kernel $\widetilde q_I(\cdot,\cdot;a)$, where integration is taken with respect to $F$.} The following theorem holds.

\begin{theorem}\label{asimptotikaM}
    Let $X_1, X_2, \dots, X_n$ be independent {and equally distributed (IID)} non-negative random variables, where $X_1$ has a distribution function $F$. Let $\gamma \in (0, 1]$ and let $(\lambda_j^{I})_{j=1}^\infty$ be the {non-increasing} sequence of eigenvalues defined above.
Then the asymptotic distributions of $\mathcal{J}_{n, \gamma,a}$ and $\mathcal{L}_{n, \gamma,a}$ are equal to
    \begin{equation}\label{asimptotika1}
        \sup\limits_{t\in (0, 1)}\Big[(t(1-t))^\gamma \Big|\frac{1}{a}EI_0\big(\frac{2X_1}{a}\big)\exp(-\frac{2X_1}{a})-\frac{1}{a}E I_0\big(\frac{2\sqrt{X_1X_2}}{a}\big)\exp\big(-\frac{X_1+X_2}{a}\big)+\sum\limits_{j=1}^\infty \lambda^{\mathcal{J}}_j\Big(\frac{B_j^2(t)}{t(1-t)}-1\Big)\Big|\Big], 
    \end{equation}
and
\begin{align}\label{asimptotika2}
    \sup\limits_{t\in (0, 1)}\Big[(t(1-t))^\gamma \Big|E\Big(\frac{1}{a+2X_1}\Big)-E\Big(\frac{1}{a+X_1+X_2}\Big)+\sum\limits_{j=1}^\infty \lambda^{\mathcal{L}}_j\Big(\frac{B_j^2(t)}{t(1-t)}-1\Big)\Big|\Big], 
    \end{align}

    
   %
  respectively, where $\{B_{j}(t),\; t\in (0,1)\}, j=1, 2, \dots$ are independent Brownian bridges.
\end{theorem}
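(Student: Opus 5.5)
We write both statistics in one form, with $q=q_I$ for a fixed $I\in\{\mathcal J,\mathcal L\}$. For fixed $k$ the integral equals the V-statistic-type expression
\[
D_k=\frac{1}{k^2}\sum_{l,m\le k}q(X_l,X_m)+\frac{1}{(n-k)^2}\sum_{l,m>k}q(X_l,X_m)-\frac{2}{k(n-k)}\sum_{l\le k<m}q(X_l,X_m).
\]
This follows from the closed forms displayed above, where for $\mathcal J$ the factor $1/a$ is absorbed into $q_{\mathcal J}$. The combination of coefficients annihilates constants and functions of a single argument. Indeed, if $q(x,y)=g(x)+g(y)+c$, then $D_k=0$. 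Hence we may replace $q$ by the degenerate kernel $\tilde q$ without changing $D_k$. Here $\tilde q$ satisfies $E\tilde q(x,X_1)=0$ for every $x$, and $E\tilde q(X_1,X_2)=0$.

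Next we separate the diagonal terms $l=m$ from the off-diagonal ones. The diagonal part equals $\frac{1}{k^2}\sum_{l\le k}\tilde q(X_l,X_l)+\frac{1}{(n-k)^2}\sum_{l>k}\tilde q(X_l,X_l)$. Multiplied by $k(n-k)/n$, by the law of large numbers (uniformly in $k$ away from the edges, via Kolmogorov/Hájek–Rényi-type inequalities) this is approximately $\frac{n-k}{n}\mu+\frac{k}{n}\mu=\mu$, where $\mu=E\tilde q(X_1,X_1)=Eq(X_1,X_1)-Eq(X_1,X_2)$. This is exactly the constant $\frac1aEI_0(2X_1/a)e^{-2X_1/a}-\frac1aEI_0(2\sqrt{X_1X_2}/a)e^{-(X_1+X_2)/a}$, respectively $E(a+2X_1)^{-1}-E(a+X_1+X_2)^{-1}$. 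For the off-diagonal part we use the spectral decomposition. The kernel $\tilde q$ is symmetric, bounded and positive semidefinite: $q_{\mathcal L}(x,y)=\int_0^\infty e^{-tx}e^{-ty}e^{-at}\,dt$, and $q_{\mathcal J}$ is similarly the $L^2(e^{-at}dt)$ inner product of $J_0(2\sqrt{tx})$, and centering preserves positive semidefiniteness. Mercer's theorem then gives $\tilde q(x,y)=\sum_j\lambda_j\phi_j(x)\phi_j(y)$ with $\sum_j\lambda_j=E\tilde q(X_1,X_1)<\infty$, where $E\phi_j(X_1)=0$ and $E\phi_j(X_1)\phi_i(X_1)=\delta_{ij}$. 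Substituting into $D_k$ including the diagonal yields
\[
\frac{k(n-k)}{n}D_k=\sum_j\lambda_j\frac{n}{k(n-k)}\Big(S_{j,k}-\frac{k}{n}S_{j,n}\Big)^2,\qquad S_{j,k}=\sum_{l\le k}\phi_j(X_l).
\]
By Donsker's theorem, $n^{-1/2}(S_{j,\lfloor nt\rfloor}-tS_{j,n})$ converges jointly over finitely many $j$ to independent Brownian bridges $B_j$. Combining this with the diagonal law of large numbers (subtracting $\lambda_j$ per coordinate, with the constant $\mu$ recombining) produces the expression $\mu+\sum_j\lambda_j\big(B_j^2(t)/(t(1-t))-1\big)$, after the weight $(t(1-t))^\gamma$ is applied.

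The truncation to $j\le J$ is controlled as follows. Because $\sum_j\lambda_j<\infty$ and $E(\cdot)^2$ of each tail summand is bounded, we obtain $E\sup_k(k(n-k)/n^2)^\gamma\sum_{j>J}\lambda_j(\dots)$ small uniformly in $n$. This uses maximal inequalities for partial sums, and for the same tail in the limit we use $E\sup_t (t(1-t))^{\gamma-1}B^2(t)<\infty$.

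The main obstacle is the behaviour near $k/n\to0$ or $1$. The normalized bridge $B^2/(t(1-t))$ has a supremum that is infinite, and it is only the factor $(t(1-t))^\gamma$ with $\gamma>0$ that tames it. I would first prove convergence of the supremum over $t\in[\delta,1-\delta]$ via the continuous mapping theorem. I would then show, using Hájek–Rényi inequalities applied to $S_{j,k}$ and to the diagonal sums, that $\max_{k<\delta n}$ and $\max_{k>(1-\delta)n}$ of the weighted statistic are $O_P$ of a quantity tending to $0$ as $\delta\to0$. Here $(k/n)^{\gamma-1}S_{j,k}^2/n$ is bounded by $\delta^\gamma$ in probability. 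The analogous limit bound gives the supremum over $(0,1)$, and the absolute value appearing in the statement is harmless since the limit of the nonnegative statistic is nonnegative.
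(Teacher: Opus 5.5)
Your proof is correct, but it is organized differently from the paper's.

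\textbf{The paper's route.} It follows the Hu\v{s}kov\'a--Meintanis U-statistic template and writes $J_{k,n-k,w}=C_{k1}+C_{k2}+C_{k3}+C_{k4}$:
\begin{itemize}
\item $C_{k2}$ is the deterministic diagonal constant, which produces $\mu$.
\item The linear and diagonal fluctuations $C_{k3},C_{k4}$ are shown negligible by separate H\'ajek--R\'enyi bounds.
\item Only the off-diagonal degenerate part $C_{k1}$ is expanded in the eigenfunctions of $\tilde q$. The truncation error is controlled by the Hilbert--Schmidt tail $\sum_{j>K}\lambda_j^2$.
\item The $-1$ in $B_j^2(t)/(t(1-t))-1$ comes from a law of large numbers for the diagonal correction inside $C_{k1}(K)$.
\end{itemize}

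\textbf{Your route.} You use positive semidefiniteness and finite trace to write the whole statistic, diagonal included, as the nonnegative series $\sum_j\lambda_j\frac{n}{k(n-k)}(S_{j,k}-\frac{k}{n}S_{j,n})^2$. This is a squared Hilbert-space CUSUM, and it gives a shorter argument:
\begin{itemize}
\item No four-term decomposition is needed. Your separate diagonal law of large numbers is in fact redundant, since it is absorbed into the convergence of each squared CUSUM to $B_j^2(t)/(t(1-t))$.
\item The tail is controlled in $L^1$ by the trace tail $\sum_{j>J}\lambda_j$.
\item The identity $\mu+\sum_j\lambda_j\bigl(B_j^2(t)/(t(1-t))-1\bigr)=\sum_j\lambda_jB_j^2(t)/(t(1-t))\ge 0$ shows exactly why the absolute value in the statement is harmless.
\item The edges $k/n\to 0,1$ are handled explicitly, whereas the paper's sketch leaves them implicit in its uniform H\'ajek--R\'enyi bounds.
\end{itemize}

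\textbf{What each route costs.} You need the expansion to hold on the diagonal and $\sum_j\lambda_j<\infty$. Here this costs nothing: $|q_I|\le 1/a$, and for a positive semidefinite kernel $\sum_j\lambda_j=E\tilde q(X_1,X_1)$ is exactly the finiteness already required for $\mu$ in the statement. The paper's template is more generic, since it would also cover degenerate kernels with eigenvalues of both signs, where a squared-norm representation is unavailable.

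\textbf{Two small technical points.}
\begin{itemize}
\item Classical Mercer is a compact-domain theorem. Either note that $q_I$ and $\tilde q_I$ extend continuously to $[0,\infty]^2$, or derive the expansion directly from the feature map $x\mapsto A(\cdot\,x)-EA(\cdot\,X_1)$ in $L^2(e^{-at}\,dt)$ via the eigen-decomposition of its covariance operator.
\item Your uniform bound on $E\sup_k$ of the weighted squared CUSUMs follows from dyadic blocking plus Doob's inequality, using $\gamma>0$. The plain H\'ajek--R\'enyi inequality only gives a weak-type estimate, so it does not yield this bound directly. Alternatively, apply H\'ajek--R\'enyi to the Hilbert-valued tail process, which gives a probability bound in terms of $\sum_{j>J}\lambda_j$.
\end{itemize}
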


Since the limiting null distributions of the proposed statistics are not distribution-free, we suggest using the permutation bootstrap algorithm from \cite{huvskova2006change} to derive p-values. This approach can be theoretically justified for our tests in a similar manner.

The theoretical development in this paper is carried out under the assumption of independent observations. This framework is natural for the proposed methodology, since our aim is to construct fully nonparametric tests for detecting distributional changes in sequences of non-negative random variables without imposing a parametric model for the data-generating mechanism. Under the null hypothesis, the observations are identically distributed and independent, and hence exchangeable. This exchangeability provides the basis for using a permutation bootstrap to approximate the null distribution of the proposed statistics.
  
\section{Power study}\label{sec::power}

In this section, we investigate the finite-sample properties of the proposed tests.

As benchmarks, we consider the empirical-characteristic-function-based statistics proposed in \cite{huvskova2006change}:

\begin{align*}
      T^{(1)}_{n, \gamma,a} =2a &\max_{1\leq k < n} \Big[\Big(\frac{k(n-k)}{n^2}\Big)^\gamma \frac{k(n-k)}{n}\Big(\frac{1}{k^2}\sum\limits_{l=1}^k\sum\limits_{m=1}^k \frac{1}{a^2+(X_l-X_m)^2}+\\&\frac{1}{(n-k)^2}\sum\limits_{l=k+1}^n\sum\limits_{m=k+1}^n \frac{1}{a^2+(X_l-X_m)^2}-\\&\frac{2}{k(n-k)}\sum\limits_{l=1}^k\sum\limits_{m=k+1}^n \frac{1}{a^2+(X_l-X_m)^2} \Big)\Big].
\end{align*}
\begin{align*}
      T^{(2)}_{n, \gamma,a} =\sqrt{\frac{\pi}{2}} &\max_{1\leq k < n}\Big[\Big(\frac{k(n-k)}{n^2}\Big)^\gamma \frac{k(n-k)}{n}\Big(\frac{1}{k^2}\sum\limits_{l=1}^k\sum\limits_{m=1}^k \exp\Big(-\frac{(X_l-X_m)^2)}{4a}\Big)+\\&\frac{1}{(n-k)^2}\sum\limits_{l=k+1}^n\sum\limits_{m=k+1}^n \exp\Big(-\frac{(X_l-X_m)^2)}{4a}\Big)-\\&\frac{2}{k(n-k)}\sum\limits_{l=1}^k\sum\limits_{m=k+1}^n \exp\Big(-\frac{(X_l-X_m)^2)}{4a}\Big) \Big)\Big].
\end{align*}

Note that, under $H_0$, the test statistics considered here, including both the proposed statistics and the competitors, are not distribution-free; therefore, we use the permutation bootstrap algorithm with $B=500$ replicates to estimate the p-values for these tests. We use $N=2000$ Monte Carlo replicates to estimate the empirical powers of the tests. 



Following the scenarios outlined in \cite{huvskova2006change}, we consider sample sizes $n=40$ and $n=100$, with the significance level set to $\alpha=0.05$. The subsamples $X_1, X_2, \dots, X_k$ and $X_{k+1}, X_{k+2}, \dots, X_n$ are generated from distributions $F_1$ and $F_n$, respectively, where $k=\frac{n}{2}$ or $k=\frac{n}{4}$, $F_1$ is either the uniform distribution $U[0,1]$ (U), the gamma distribution $\Gamma(1, 1)$ (G1), or the gamma distribution $\Gamma(2, 1)$ (G2), and $F_n$ is the distribution of $bX_1+1$.

The powers for $U[0, 1]$ are equal to 1 and are therefore omitted from the results; the remaining powers, expressed as percentages, are presented in Tables \ref{PWOR}--\ref{PWOR100G2}. In the tables, * denotes a test power of 100 percent.

The results show that all tests are well calibrated. It is clear that the proposed tests $\mathcal{J}$ and $\mathcal{L}$ exhibit higher power than the benchmark tests. The parameter $\gamma$ appears to have a greater influence on the powers of the tests when the change-point is located in the middle of the sample. Moreover, the actual location of the change-point significantly influences the empirical powers, which peak in the symmetric case. For both proposed tests, the tuning parameter $a$ in the weight function has an impact, and no universal conclusion can be drawn. { However, for both tests, $a = 1$ appears to offer a reasonable compromise, providing stable performance across different scenarios.}

To assess the quality of the change-point estimate, we generated $N=10,000$ sample replications and estimated the change-point using all statistics included in the power study. We present the mean and the standard deviation of the estimated position in Tables \ref{Gamma05Pozicija40} and \ref{Gamma05Pozicija100}. The results indicate that the proposed tests yield more precise estimates when the true position of the change-point is at $\frac{1}{4}$ of the sample, whereas for $\frac{1}{2}$ of the sample, the proposed tests are comparable to the benchmarks, with smaller standard deviations for $G_1$ and $G_2$. The estimation quality improves with increasing sample size, as expected.

\section{Real data examples}\label{sec::realdata} 

In this section, we apply the proposed tests to several types of data. The examples cover macroeconomic, environmental, and financial time series, illustrating the practical usefulness of the proposed methodology in detecting meaningful distributional changes. The proposed tests are designed for independent non-negative observations. Accordingly, the real-data examples below are analyzed within this framework. In each application, the observations are indexed by their natural chronological order, and the objective is to detect departures from distributional homogeneity along this index. Thus, the resulting change-points should be interpreted as evidence of changes in the marginal distribution of the transformed observations, rather than as estimates from a fully specified dependent time-series model.

\subsection{U.S. Gross National Product (GNP) Data}

In this subsection, we analyze a data set consisting of quarterly U.S. GNP, in billions of chained 2017 dollars, from 1947(1) to 2023(3); the data have been seasonally adjusted \cite{gnpdata}. The first difference of the logarithm of GNP can be naturally interpreted as the GNP growth rate \cite{shao2010testing}. In \cite{shumway2006time}, the authors consider a shorter time frame, from 1947 (1) to 2002 (3), in chained 1996 dollars, also seasonally adjusted. The authors examine the difference in the logarithm of GNP and conclude that there may be a structural break in the data in 1985. The authors in \cite{shao2010testing} reach the same conclusion, using the same data set as in \cite{shumway2006time}. 

However, our tests are designed for non-negative data. Therefore, we consider the absolute difference in the logarithm of GNP. The data are shown in Figure \ref{fig:GNP}. We performed single-change-point detection using the permutation bootstrap with $B=1,000$ replications to obtain $p$-values. The results are presented in Table \ref{pvalUSGNP}. The proposed tests report $p$-values smaller than 0.05. Our results for both proposed tests are consistent with those in \cite{shao2010testing,shumway2006time} and provide further evidence that a structural break occurred in 1985. Moreover, the competing tests also detect the existence of a change-point, although the estimated location is slightly earlier, which may be explained by their lower precision, as demonstrated in Tables \ref{Gamma05Pozicija40} and \ref{Gamma05Pozicija100}.

\begin{figure}[H]
\caption{Quarterly U.S. GNP absolute growth rate from 1947(1) to 2023(3)}
\includegraphics[width=0.6\textwidth]{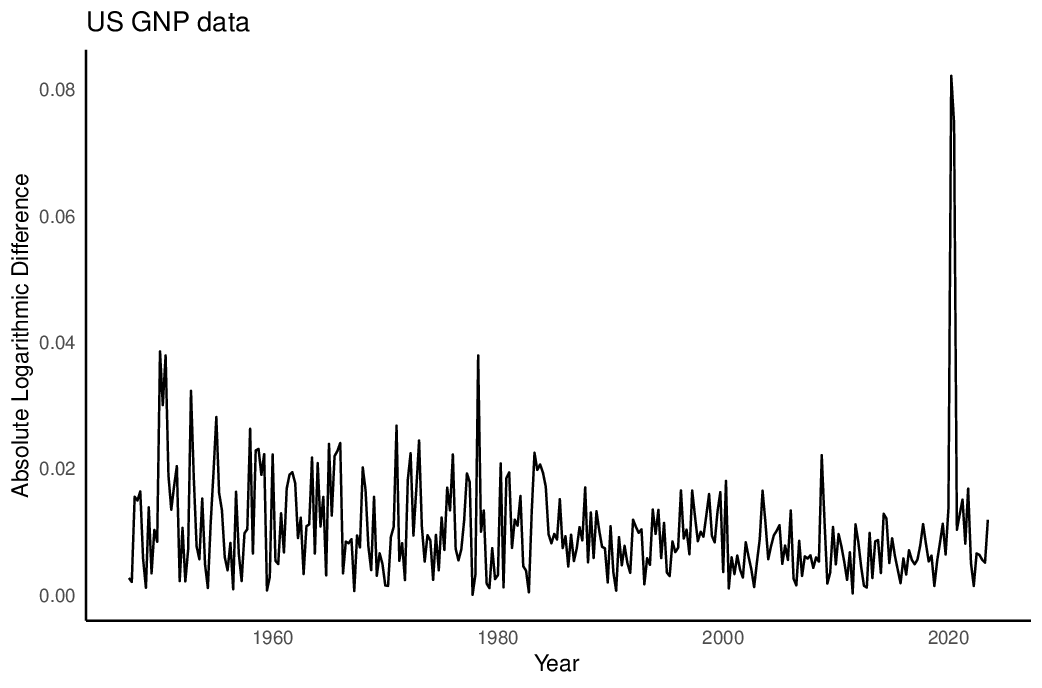}
\label{fig:GNP}
\centering
\end{figure}

\subsection{Argentina rainfall data}

We apply our tests to detect a change in the distribution of annual rainfall data from Argentina. The rainfall data contain annual rainfall measurements, in millimeters, for Argentina from 1884 to 1996. The data set originates from \cite{wu2001isotonic}, where the authors proposed a test statistic based on isotonic regression. The authors in \cite{shao2010testing} considered detecting a change-point in the mean of the data. The authors in \cite{wu2001isotonic} reported that the data provider believed that there was a change in the mean corresponding to the construction of a dam during 1952 -- 1962, and the results from \cite{shao2010testing} supported that hypothesis.

Since we were unable to locate the tabulated data, we used Figure 5 from \cite{shao2010testing} to reconstruct the values in the data set; see Figure \ref{fig:argentina}. We performed single-change-point detection using the permutation bootstrap with $B=1,000$ replications to obtain $p$-values. The results are presented in Table \ref{pvalArgentina}. The proposed tests reported $p$-values less than 0.05, and the estimated change-point location was 1956 for each proposed test. Some competitor tests did not detect the change-point and their location estimates were imprecise. The results of the proposed tests are consistent with those reported in \cite{ shao2010testing, wu2001isotonic}.

\begin{figure}[htbp]
\caption{Argentina rainfall data: yearly rainfall ({millimeters}) in Argentina from 1884 to 1996.}
\includegraphics[width=0.55\textwidth]{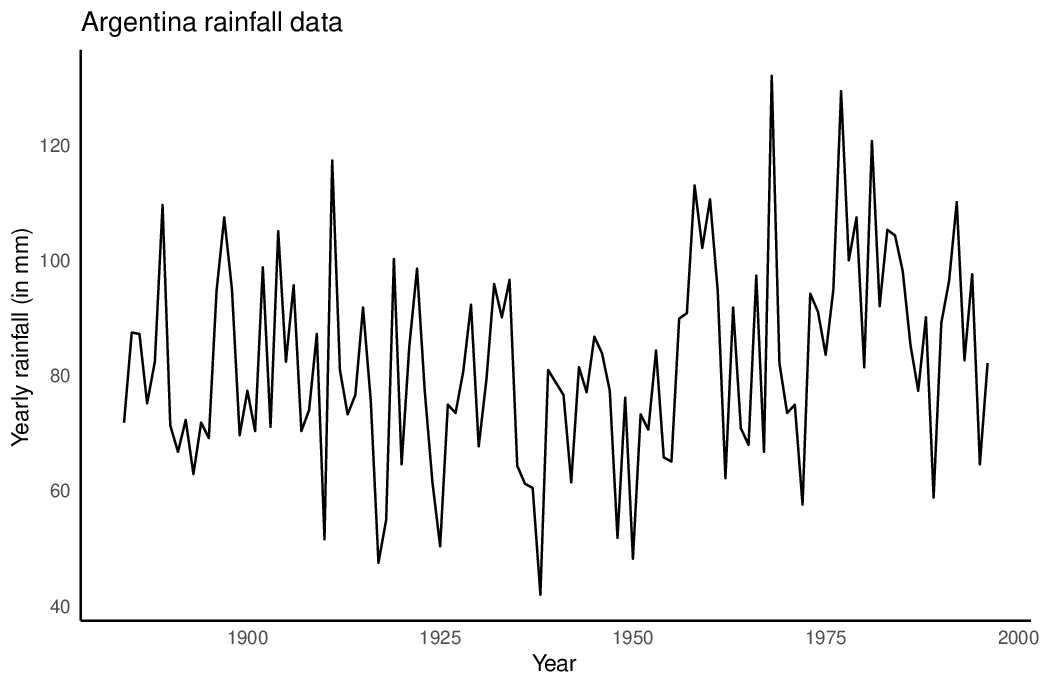}
\label{fig:argentina}
\centering
\end{figure}

\subsection{S\&P 500 Daily Absolute Log-Returns -- 1 January 2018 to
31 December 2024}

The S\&P~500 daily closing prices for the period from 1 January 2018 to 31 December 2024 were downloaded from the Federal Reserve Economic Data (FRED) repository (\href{https://fred.stlouisfed.org}{https://fred.stlouisfed.org}). The period includes several important events that may be of interest for backtesting trading algorithms, including "Volmageddon" \cite{augustin2021volmageddon} and the COVID-19 pandemic, which significantly affected the S\&P 500 index \cite{yilmazkuday2023covid}.
The daily logarithmic returns,
  $r_t = \log\frac{P_t}{P_{t-1}}$ were computed for each trading day, yielding a total of $n = 1{,}759$ observations after removing non-trading dates and missing values.
 Absolute logarithmic returns $|r_t|$ were calculated, and multiple change-points were detected using the binary segmentation algorithm (Algorithm~\ref{BSAl}).

For the S\&P 500 application, we apply the tests to absolute log-returns, which are non-negative and may be interpreted as measuring the magnitude of daily market movements. The analysis is conducted within the IID/exchangeability framework of the proposed tests. Therefore, the detected change-points should be interpreted as evidence of changes in the distribution of return magnitudes under this nonparametric benchmark, rather than as inference from a model explicitly allowing for serial dependence or volatility clustering.

We used $\mathit{Window} = 30$ and $B = 1,000$ bootstrap replicates, at significance level $\alpha = 0.05$. To prevent computationally expensive recursion and for diagnostic purposes, the algorithm runs for a maximum of $50$ iterations and records the depth reached at each detected change-point. The self-normalization parameter $\gamma$ was varied across $\gamma = 0.5$ and $\gamma = 1$. The tuning parameter $a$ was varied over the grid $a \in \{1, 1.5, 2, 3, 4\}$ for each of the four test statistics $T^{(1)}$, $T^{(2)}$, $\mathcal{L}$, and $\mathcal{J}$, yielding forty statistic--parameter combinations in total.

It is important to choose the parameter $\mathit{Window}$ appropriately: choosing $\mathit{Window}$ too large causes the algorithm to miss genuine change-points, while choosing it too small leads to spurious detections, particularly during periods of sustained elevated volatility. Fine-tuning is therefore necessary to achieve optimal performance.

Discrepancies in detection between $\gamma = 0.5$ and $\gamma = 1$ occur around the post-COVID stabilization period, the mid-2020 recovery phase, and the later inflation and Fed repricing period. The tests with $\gamma = 1$ tend to detect upward market movements, while those with $\gamma = 0.5$ tend to detect downward movements in these periods. The detected change-points align closely with major episodes of market-wide uncertainty and repricing in U.S. equities; key events in financial media that provide context for the identified change-point locations are presented in Table \ref{tab:sp500cp} for both $\gamma = 0.5$ and $\gamma = 1$. An illustration of the detected change-points is provided in Figure \ref{fig:sp500cp}.

\begin{figure}[H]
    \centering
    \includegraphics[width=0.95\textwidth]{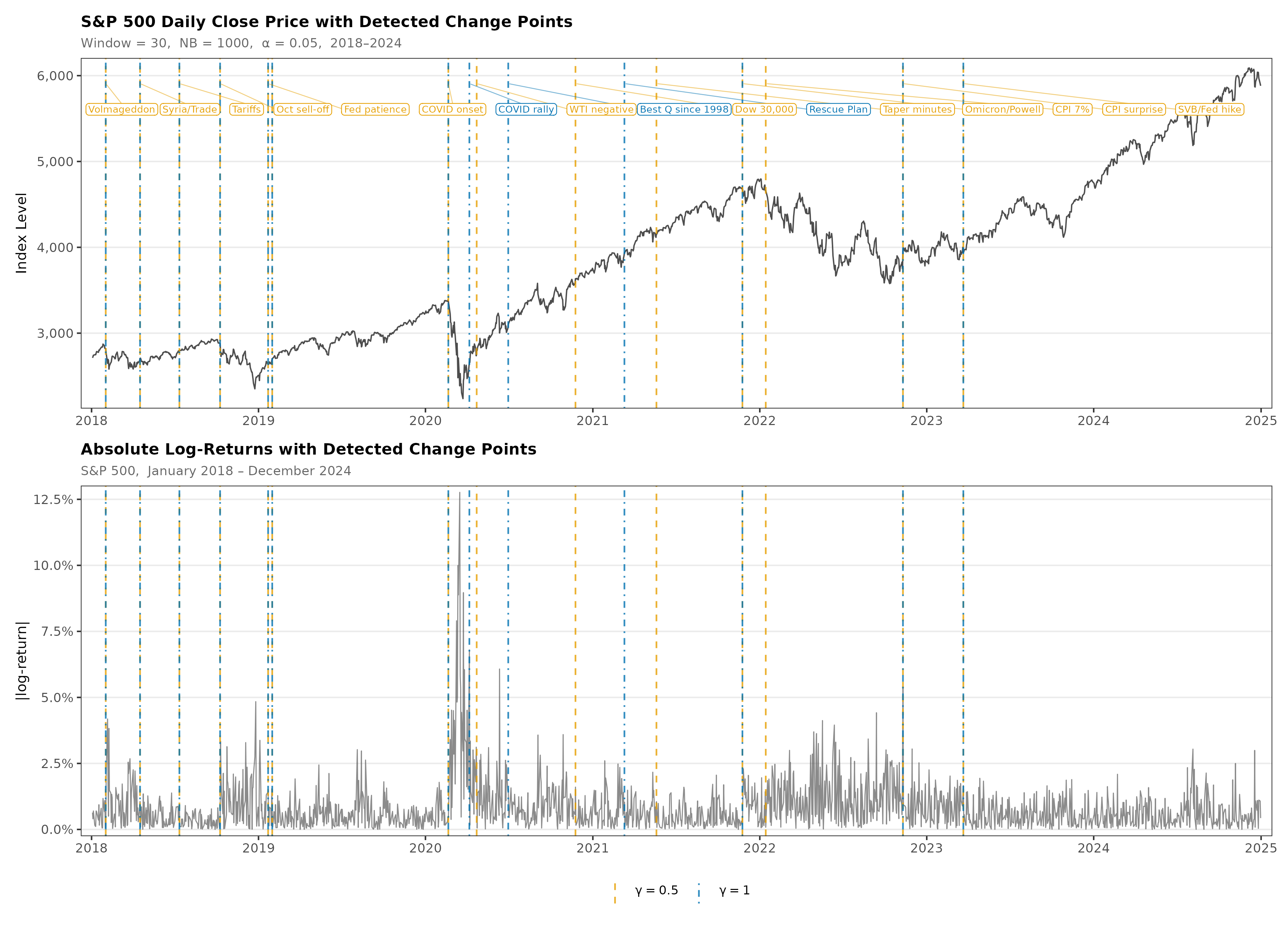}
    \caption{Detected change-points in the S\&P~500 daily close price (top)
             and absolute log-returns (bottom) for $\gamma = 0.5$ and
             $\gamma = 1$. Vertical lines mark the locations detected by
             the binary segmentation procedure with $\mathit{Window} = 30$,
             $NB = 1000$ bootstrap replicates, and significance level
             $\alpha = 0.05$. Amber dashed lines correspond to
             $\gamma = 0.5$; blue dot-dashed lines to $\gamma = 1$.}
    \label{fig:sp500cp}
\end{figure}

\section{Conclusion}\label{sec-conc}

In this paper, we introduced two new classes of nonparametric change-point tests for sequences of univariate non-negative random variables. The proposed procedures are based on the empirical modified Hankel transform and the Laplace transform, respectively, and extend the transform-based approach to change-point detection in a new direction. We derived the asymptotic null distributions of the proposed test statistics and their generalization and discussed the use of a permutation bootstrap procedure for obtaining $p$-values.

The finite-sample study shows that the proposed tests are well calibrated and generally exhibit higher empirical power than the benchmark empirical-characteristic-function-based tests. The simulation results also indicate that the proposed methods provide accurate estimates of the change-point location, particularly in asymmetric settings where the true change-point is not located at the center of the sample. Among the tuning parameters considered, the choice $a=1$ appears to provide stable overall performance across a range of scenarios.

The practical usefulness of the proposed methodology was illustrated through several real-data examples from economics, meteorology, and finance. In the U.S. GNP data, the proposed tests detected a structural break consistent with previous findings in the literature. In the Argentina rainfall data, the tests identified a change-point corresponding to the period associated with the construction of a dam. Finally, the S\&P 500 application demonstrated that the proposed methods can be incorporated into a binary segmentation procedure to detect multiple changes in financial volatility. Overall, these results suggest that the proposed Hankel- and Laplace-transform-based tests provide effective and flexible tools for nonparametric change-point detection in non-negative data.

Future work may consider data-driven selection of the tuning parameter $a$, extensions to dependent observations, and further development of multiple-change-point procedures based on the proposed statistics.

\section*{Disclosure statement}\label{disclosure-statement}

The authors have no interest to declare.

\section*{Data Availability Statement}\label{data-availability-statement}

The data used in this study are available from the authors upon reasonable request.

\section*{Acknowledgment of AI Assistance}
{Claude Sonnet 4.6 (Anthropic, 2025) was used for grammar and language editing, \LaTeX{} typesetting, and code formatting. All scientific content, methodology, analysis, and conclusions were produced independently of any AI tool. No AI tool was used to generate, fabricate, analyze, or impute data. Full responsibility for the accuracy of the work presented is assumed by its contributors.
}
\bibliographystyle{plain}
\bibliography{referencesArXiv}

\newpage
\appendix
\renewcommand{\thesection}{\Alph{section}}
\section{Appendix A}\label{appendixA}
\begin{table}[htbp]
\caption{Model G1: Test powers for different alternatives and $n =40$ using permutation without replacement (PWOR) method and B=500.}
 \label{PWOR}
 \centering
\begin{tabular}{@{}lccccc@{}}\\
\toprule
 &no chpt. & $k=20, b=1$ & $k=20, b=\sqrt2$ & $k=10, b=1$ & $k=10, b=\sqrt2$ \\ \midrule
$\mathcal{L}_{0.5, 0.5}$ & 5 & 98 & 99 & 96 & 98 \\
$\mathcal{L}_{1, 0.5}$ & 5 & 99 & * & 96 & 98 \\
$\mathcal{L}_{0.5, 1}$ & 5 & 98 & 99 & 94 & 97 \\
$\mathcal{L}_{1, 1}$ & 5 & 99 & 99 & 92 & 97 \\
$\mathcal{L}_{0.5, 2}$ & 5 & 97 & 99 & 90 & 95 \\
$\mathcal{L}_{1, 2}$ & 5 & 97 & 99 & 88 & 95 \\
$\mathcal{J}_{0.5, 0.5}$ & 5 & 99  &  95 &  95 & 96  \\
$\mathcal{J}_{0.5, 1}$ & 4 & 98 & * & 93 &  96\\
$\mathcal{J}_{0.5, 2}$ & 5 & 96  &  99 & 88  & 94  \\
$\mathcal{J}_{1, 0.5}$ & 5 &  * &  * &  94 &  96 \\
$\mathcal{J}_{1, 1}$ & 5 & 99 & 99 & 92 & 95 \\
$\mathcal{J}_{1, 2}$ & 5 &  97 &  99 &  87 & 92  \\\midrule
$T^{(1)}_{ 0.5,1}$ & 5 & 96 & 97 & 78 & 78 \\
$T^{(1)}_{ 0.5,1.5}$ & 5 & 94 & 96 & 75 & 77 \\
$T^{(1)}_{ 0.5,2}$ & 5 & 92 & 95 & 73 & 76 \\
$T^{(1)}_{ 0.5,3}$ & 5 & 89 & 94 & 69 & 74 \\
$T^{(1)}_{ 0.5,4}$ & 5 & 87 & 93 & 66 & 72 \\
$T^{(1)}_{ 1,1}$ & 4 & 97 & 98 & 74 & 73 \\
$T^{(1)}_{ 1,1.5}$ & 5 & 96 & 97 & 72 & 73 \\
$T^{(1)}_{ 1,2}$ & 5 & 94 & 96 & 70 & 72 \\
$T^{(1)}_{ 1,3}$ & 5 & 91 & 96 & 66 & 71 \\
$T^{(1)}_{ 1,4}$ & 5 & 89 & 95 & 64 & 70 \\
$T^{(2)}_{ 0.5,1}$ & 5 & 91 & 94 & 72 & 75 \\
$T^{(2)}_{ 0.5,1.5}$ & 5 & 90 & 94 & 69 & 74 \\
$T^{(2)}_{ 0.5,2}$ & 5 & 88 & 93 & 68 & 72 \\
$T^{(2)}_{ 0.5,3}$ & 5 & 86 & 93 & 66 & 71 \\
$T^{(2)}_{ 0.5,4}$ & 5 & 85 & 92 & 64 & 70 \\
$T^{(2)}_{ 1,1}$ & 5 & 93 & 96 & 68 & 71 \\
$T^{(2)}_{ 1,1.5}$ & 5 & 91 & 95 & 66 & 70 \\
$T^{(2)}_{ 1,2}$ & 5 & 90 & 95 & 65 & 70 \\
$T^{(2)}_{ 1,3}$ & 5 & 89 & 95 & 63 & 69 \\
$T^{(2)}_{ 1,4}$ & 5 & 88 & 94 & 62 & 69 \\ \bottomrule
\end{tabular}
\end{table}

 \begin{table}[htbp]
 \caption{Model G2: Test powers for different alternatives and $n =40$ using the PWOR method and B=500.}\label{POWER1}
 \centering
  \centering
\begin{tabular}{@{}lccccc@{}}\\
\toprule
 & no chpt. & $k=20, b=1$ & $k=20, b=\sqrt2$ & $k=10, b=1$ & $k=10, b=\sqrt2$ \\ \midrule
$\mathcal{L}_{0.5, 0.5}$ & 4 & 71 & 92 & 64 & 85 \\
$\mathcal{L}_{1, 0.5}$ & 5 & 78 & 95 & 62 & 83 \\
$\mathcal{L}_{0.5, 1}$ & 4 & 72 & 92 & 62 & 84 \\
$\mathcal{L}_{1, 1}$ & 5 & 76 & 94 & 59 & 82 \\
$\mathcal{L}_{0.5, 2}$ & 5 & 69 & 92 & 58 & 82 \\
$\mathcal{L}_{1, 2}$ & 5 & 74 & 94 & 54 & 79 \\
$\mathcal{J}_{0.5, 0.5}$ & 5 &  69& 87  &  54 &  73  \\
$\mathcal{J}_{0.5, 1}$ & 5 & 71 & 91 & 57 & 77 \\
$\mathcal{J}_{0.5, 2}$ & 5 &  68 & 92  &  55 & 78  \\
$\mathcal{J}_{1, 0.5}$ & 5 &  73 &  90 &  51 &   69\\
$\mathcal{J}_{1, 1}$ & 5 & 74 & 92 & 54 & 73 \\
$\mathcal{J}_{1, 2}$ & 5 & 72  &  93 &  51 &  75 \\\midrule
$T^{(1)}_{0.5, 1}$ & 5 & 52 & 74 & 30 & 45 \\
$T^{(1)}_{0.5, 1.5}$ & 5 & 54 & 79 & 32 & 50 \\
$T^{(1)}_{ 0.5, 2}$ & 5 & 54 & 80 & 32 & 53 \\
$T^{(1)}_{ 0.5, 3}$ & 5 & 54 & 82 & 32 & 55 \\
$T^{(1)}_{ 0.5, 4}$ & 5 & 53 & 83 & 31 & 56 \\
$T^{(1)}_{ 1, 1}$ & 5 & 55 & 77 & 26 & 40 \\
$T^{(1)}_{ 1, 1.5}$ & 5 & 57 & 80 & 28 & 45 \\
$T^{(1)}_{ 1, 2}$ & 5 & 56 & 83 & 28 & 49 \\
$T^{(1)}_{ 1, 3}$ & 5 & 56 & 85 & 29 & 52 \\
$T^{(1)}_{ 1, 4}$ & 5 & 56 & 85 & 29 & 52 \\
$T^{(2)}_{ 0.5, 1}$ & 5 & 53 & 78 & 31 & 50 \\
$T^{(2)}_{ 0.5, 1.5}$ & 6 & 54 & 80 & 31 & 52 \\
$T^{(2)}_{ 0.5, 2}$ & 5 & 53 & 81 & 31 & 54 \\
$T^{(2)}_{ 0.5, 3}$ & 5 & 53 & 82 & 31 & 55 \\
$T^{(2)}_{ 0.5, 4}$ & 5 & 52 & 82 & 31 & 55 \\
$T^{(2)}_{ 1, 1}$ & 5 & 56 & 80 & 28 & 46 \\
$T^{(2)}_{ 1, 1.5}$ & 5 & 56 & 82 & 28 & 49 \\
$T^{(2)}_{ 1,2 }$ & 5 & 56 & 84 & 29 & 50 \\
$T^{(2)}_{ 1, 3}$ & 5 & 55 & 85 & 29 & 51 \\
$T^{(2)}_{ 1, 4}$ & 5 & 55 & 85 & 30 & 53 \\ \bottomrule
\end{tabular}
\end{table}


\begin{table}[htbp]
 \caption{Model G1: Test powers for different alternatives and $n =100$ using the PWOR method and B=500.}
 \label{PWOR100G1}
 \centering
\begin{tabular}{@{}lccccc@{}}\\
\toprule
 & no chpt. & $k=50, b=1$ & $k=50, b=\sqrt2$ & $k=25, b=1$ & $k=25, b=\sqrt2$ \\ \midrule
$\mathcal{L}_{0.5, 0.5}$ & 5 & * & * & * & * \\
$\mathcal{L}_{1, 0.5}$ & 5 & * & * & * & * \\
$\mathcal{L}_{0.5, 1}$ & 5 & * & * & * & * \\
$\mathcal{L}_{1, 1}$ & 5 & * & * & * & * \\
$\mathcal{L}_{0.5, 2}$ & 5 & * & * & * & * \\
$\mathcal{L}_{1, 2}$ & 5 & * & * & * & * \\
$\mathcal{J}_{0.5, 0.5}$ & 4 &  * & *  & *  & * \\
$\mathcal{J}_{0.5, 1}$ & 4 & * & * & * & * \\
$\mathcal{J}_{0.5, 2}$ & 5 &  * &  * &  * &  * \\
$\mathcal{J}_{1, 0.5}$ & 5 & *  & *  & *  &  * \\
$\mathcal{J}_{1, 1}$ & 4 & * & * & * & * \\ 
$\mathcal{J}_{1, 2}$ & 4 &  * & *  &  * &  * \\\midrule
$T^{(1)}_{ 0.5, 1}$ & 5 & * & * & * & 99 \\
$T^{(1)}_{ 0.5, 1.5}$ & 4 & * & * & * & 99 \\
$T^{(1)}_{ 0.5, 2}$ & 4 & * & * & 99 & 99 \\
$T^{(1)}_{ 0.5, 3}$ & 5 & * & * & 99 & 99 \\
$T^{(1)}_{ 0.5, 4}$ & 5 & * & * & 98 & 99 \\
$T^{(1)}_{ 1, 1}$ & 5 & * & * & * & 99 \\
$T^{(1)}_{ 1, 1.5}$ & 5 & * & * & 99 & 99 \\
$T^{(1)}_{ 1, 2}$ & 5 & * & * & 99 & 99 \\
$T^{(1)}_{ 1, 3}$ & 5 & * & * & 98 & 99 \\
$T^{(1)}_{ 1, 4}$ & 5 & * & * & 97 & 99 \\
$T^{(2)}_{ 0.5, 1}$ & 5 & * & * & 99 & 99 \\
$T^{(2)}_{ 0.5, 1.5}$ & 5 & * & * & 99 & 99 \\
$T^{(2)}_{ 0.5, 2}$ & 5 & * & * & 98 & 99 \\
$T^{(2)}_{ 0.5, 3}$ & 5 & * & * & 98 & 99 \\
$T^{(2)}_{ 0.5, 4}$ & 5 & * & * & 97 & 99 \\
$T^{(2)}_{ 1, 1}$ & 5 & * & * & 99 & 99 \\
$T^{(2)}_{ 1, 1.5}$ & 5 & * & * & 98 & 99 \\
$T^{(2)}_{ 1, 2}$ & 5 & * & * & 98 & 99 \\
$T^{(2)}_{ 1, 3}$ & 5 & * & * & 97 & 99 \\
$T^{(2)}_{ 1, 4}$ & 5 & * & * & 97 & 99 \\ \bottomrule
\end{tabular}
\end{table}

\begin{table}[htbp]
 \caption{Model G2: Test powers for different alternatives and $n =100$ using the PWOR method and B=500.} 
  \label{PWOR100G2}
 \centering
\begin{tabular}{@{}lccccc@{}}\\
\toprule
 & no chpt. & $k=50, b=1$ & $k=50, b=\sqrt2$ & $k=25, b=1$ & $k=25, b=\sqrt2$ \\ \midrule
$\mathcal{L}_{0.5, 0.5}$ & 5 & 99 & * & 97 & * \\
$\mathcal{L}_{1, 0.5}$ & 4 & * & * & 96 & * \\
$\mathcal{L}_{0.5, 1}$ & 5 & 99 & * & 96 & * \\
$\mathcal{L}_{1, 1}$ & 5 & * & * & 95 & * \\
$\mathcal{L}_{0.5, 2}$ & 5 & 99 & * & 95 & * \\
$\mathcal{L}_{1, 2}$ & 5 & 99 & * & 93 & * \\
$\mathcal{J}_{0.5, 0.5}$ & 5 & *  & *  & 93   & 99  \\
$\mathcal{J}_{0.5, 1}$ & 5 & 99 & * & 94 & * \\
$\mathcal{J}_{0.5, 2}$ & 5 & 99  &  * &  93 &  99 \\
$\mathcal{J}_{1, 0.5}$ & 5 &  99 & *  & 91  &  99 \\
$\mathcal{J}_{1, 1}$ & 5 & 99 & * & 93 & * \\
$\mathcal{J}_{1, 2}$ & 5 &  99 & *  &  92  &  99 \\\midrule
$T^{(1)}_{ 0.5, 1}$ & 5 & 94 & 99 & 74 & 92 \\
$T^{(1)}_{ 0.5, 1.5}$ & 5 & 94 & 99 & 76 & 94 \\
$T^{(1)}_{ 0.5, 2}$ & 6 & 94 & 99 & 77 & 95 \\
$T^{(1)}_{ 0.5, 3}$ & 6 & 93 & 99 & 76 & 96 \\
$T^{(1)}_{ 0.5, 4}$ & 6 & 92 & 99 & 75 & 97 \\
$T^{(1)}_{ 1, 1}$ & 5 & 95 & 99 & 69 & 90 \\
$T^{(1)}_{ 1, 1.5}$ & 5 & 95 & 99 & 71 & 93 \\
$T^{(1)}_{ 1, 2}$ & 5 & 95 & 99 & 72 & 94 \\
$T^{(1)}_{ 1, 3}$ & 5 & 95 & 99 & 73 & 95 \\
$T^{(1)}_{ 1, 4}$ & 5 & 94 & * & 72 & 96 \\
$T^{(2)}_{ 0.5, 1}$ & 5 & 93 & 99 & 76 & 94 \\
$T^{(2)}_{ 0.5, 1.5}$ & 6 & 93 & 99 & 76 & 95 \\
$T^{(2)}_{ 0.5, 2}$ & 6 & 93 & 99 & 76 & 96 \\
$T^{(2)}_{ 0.5, 3}$ & 6 & 92 & 99 & 75 & 96 \\
$T^{(2)}_{ 0.5, 4}$ & 6 & 92 & * & 74 & 97 \\
$T^{(2)}_{ 1, 1}$ & 5 & 95 & 99 & 72 & 93 \\
$T^{(2)}_{ 1, 1.5}$ & 5 & 95 & 99 & 72 & 94 \\
$T^{(2)}_{ 1, 2}$ & 5 & 94 & 99 & 72 & 95 \\
$T^{(2)}_{ 1, 3}$ & 6 & 94 & 99 & 72 & 96 \\
$T^{(2)}_{ 1, 4}$ & 5 & 93 & * & 71 & 96 \\ \bottomrule
\end{tabular}
\end{table}



\begin{table}[htbp]
\caption{Estimates and standard deviations of change-point locations, $n = 40$.}
\label{Gamma05Pozicija40}
\centering
\begin{tabular}{@{}lccc@{}@{\hspace{5pt}}|@{\hspace{2pt}}ccc@{}}\\
\toprule \centering
& \multicolumn{3}{c}{$b=1, k=20$}& \multicolumn{3}{c}{$b=1, k=10$}\\\midrule
 & G1 & G2 & U &   G1 & G2 & U  \\\midrule
$\mathcal{L}_{0.5, 0.5}$ & 18.59 (2.405) & 17.59 (4.344) & 19.55 (0.914)& 9.4 (1.932) & 10.23 (5.123) & 9.85 (0.485) \\

$\mathcal{L}_{1, 0.5}$ & 18.85 (1.924) & 18.23 (3.495) & 19.6 (0.803)&10.04 (1.987)&11.48 (4.882) &10.03 (0.256) \\
$\mathcal{L}_{0.5, 1}$ & 18.77 (2.305) & 17.91 (4.374) & 19.71 (0.672)&9.68 (2.214) & 10.7 (5.471) & 9.95 (0.289) \\
$\mathcal{L}_{1, 1}$ & 18.99 (1.871) & 18.43 (3.508) & 19.74 (0.601)&10.39 (2.35) & 11.77 (5.015) & 10.09 (0.357) \\
$\mathcal{L}_{0.5, 2}$ & 18.98 (2.355) & 18.25 (4.476) & 19.83 (0.476)& 10.1 (2.648) & 11.18 (5.779) & 9.99 (0.221)  \\
$\mathcal{L}_{1, 2}$ & 19.17 (1.938) & 18.64 (3.616) & 19.85 (0.434) &10.92 (2.873) & 12.28 (5.331) & 10.19 (0.529)  \\
$\mathcal{J}_{0.5, 0.5}$ & 18.73 (2.212)  & 18.06 (4.472) & 19.69 (0.720) & 9.52 (2.155) & 11.36 (6.034) & 9.92 (0.376)  \\
$\mathcal{J}_{1, 0.5}$ & 18.93  (1.811)& 18.51 (3.584)& 19.73 (0.633) & 10.09 (2.094) & 12.45 (5.522) &  10.06 (0.344) \\
$\mathcal{J}_{0.5, 1}$ & 18.87 (2.204) & 18.12 (4.442) & 19.78 (0.567) &9.75 (2.169) & 11.29 (5.925) & 9.97 (0.25)  \\

$\mathcal{J}_{1, 1}$ & 19.06 (1.756) & 18.57 (3.610) & 19.8 (0.501)& 10.39 (2.377) & 12.29 (5.37) & 10.12 (0.43)  \\
$\mathcal{J}_{0.5, 2}$ & 19.09 (2.388)   & 18.39 (4.567) & 19.89 (0.372) & 10.27 (2.900) & 11.44 (5.926)  &  10.03 (0.269) \\
$\mathcal{J}_{1, 2}$ & 19.24 (1.993) & 18.74 (3.607) & 19.91 (0.338) & 11.08 (3.028) & 12.56 (5.525)  &  10.25 (0.676) \\
\midrule
$T^{(1)}_{0.5, 1}$ & 19.27 (2.13) & 19.3 (5.083) & 20.01 (0.224)& 10.97 (4.085) & 14.38 (7.485) & 10.13 (0.513)  \\
$T^{(1)}_{0.5, 1.5}$ & 19.36 (2.356) & 19.39 (5.187) & 20.01 (0.223)& 11.2 (4.395) & 14.03 (7.345) & 10.15 (0.538) \\
$T^{(1)}_{0.5, 2}$ & 19.48 (2.524) & 19.47 (5.174) & 20.01 (0.215)& 11.42 (4.583) & 13.98 (7.434) & 10.16 (0.545)  \\
$T^{(1)}_{0.5, 3}$ & 19.464 (2.873) & 19.513 (5.312) & 20 (0.198) & 11.626 (4.88) & 14.175 (7.643) & 10.174 (0.53) \\ 
$T^{(1)}_{0.5, 4}$ & 19.52 (3.137) & 19.541 (5.493) & 20 (0.192) & 11.868 (5.186) & 14.217 (7.753) & 10.18 (0.537) \\

$T^{(1)}_{1, 1}$ & 19.38 (1.729) & 19.34 (4.037) & 20.01 (0.206) & 11.85 (4.229) & 15.25 (6.404) & 10.33 (0.827)\\
$T^{(1)}_{1, 1.5}$ & 19.48 (1.867) & 19.38 (4.019) & 20.01 (0.195) & 11.99 (4.259) & 14.89 (6.401) & 10.38 (0.86) \\

$T^{(1)}_{1, 2}$ & 19.52 (2.061) & 19.44 (4.141) & 20.01 (0.196)  & 12.23 (4.407) & 14.74 (6.428) & 10.41 (0.879)  \\

$T^{(1)}_{1, 3}$ & 19.538 (2.357)& 19.577 (4.177)& 20 (0.171)& 12.403 (4.511) & 14.854 (6.621) & 10.446 (0.918)\\
$T^{(1)}_{1, 4}$ & 19.574 (2.524) & 19.609 (4.29) & 20 (0.169) & 12.655 (4.71) & 14.866 (6.683) & 10.461 (0.936) \\

$T^{(2)}_{0.5, 1}$ & 19.52 (2.65) & 19.42 (5.358) & 20.01 (0.217) & 11.56 (4.688) & 14.12 (7.527) & 10.17 (0.556) \\
$T^{(2)}_{0.5, 1.5}$ & 19.57 (2.841) & 19.45 (5.366) & 20.01 (0.212)& 11.87 (5.031) & 14 (7.543) & 10.18 (0.577) \\

$T^{(2)}_{0.5, 2}$ & 19.61 (2.962) & 19.5 (5.473) & 20.01 (0.211)& 12.07 (5.307) & 14.03 (7.623) & 10.19 (0.586) \\
$T^{(2)}_{0.5, 3}$ & 19.972 (2.351) & 20.189 (2.911) & 20 (0) & 12.964 (4.658) & 13.844 (5.5) & 10.297 (0.706)\\
$T^{(2)}_{0.5, 4}$ & 20.033 (2.383) & 20.24 (2.916) & 20 (0) & 13.066 (4.724) & 13.848 (5.455) & 10.308 (0.718)\\

$T^{(2)}_{1, 1}$ & 19.57 (2.134) & 19.51 (4.205) & 20.01 (0.186)& 12.39 (4.497) & 14.82 (6.529) & 10.42 (0.893) \\

$T^{(2)}_{1, 1.5}$ & 19.65 (2.301) & 19.55 (4.362) & 20.01 (0.186)& 12.57 (4.661) & 14.73 (6.491) & 10.44 (0.905) \\ 
$T^{(2)}_{1, 2}$ & 19.67 (2.485) & 19.53 (4.343) & 20.01 (0.184) & 12.73 (4.867) & 14.69 (6.494) & 10.45 (0.906) \\
$T^{(2)}_{1, 3}$ & 19.602 (2.574) & 19.607 (4.335) & 20 (0.169) & 12.73 (4.777) & 14.892 (6.707) & 10.471 (0.945) \\
$T^{(2)}_{1, 4}$ & 19.623 (2.67) & 19.618 (4.378) & 20 (0.168) & 12.85 (4.878) & 14.905 (6.745) & 10.477 (0.951) \\

 \bottomrule
\end{tabular}
\end{table}

\begin{table}[htbp]\ContinuedFloat
\caption{Estimates and standard deviations of change-point locations{, $n=40$ (continued)}.}
\centering
\begin{tabular}{{@{}lccc@{}@{\hspace{5pt}}|@{\hspace{2pt}}ccc@{}}}\\
\toprule
& \multicolumn{3}{c}{$b=\sqrt{2}, k=20$} &  \multicolumn{3}{c}{$b=\sqrt{2}, k=10$} \\\midrule
 & G1 & G2 & U  & G1 & G2 & U \\ \midrule
$\mathcal{L}_{0.5, 0.5}$ & 18.7 (2.31) & 18.57 (3.141) & 19.81 (0.572) & 9.47 (1.619) & 10.08 (3.457) & 9.98 (0.158) \\
$\mathcal{L}_{1, 0.5}$ & 18.94 (1.841) & 18.92 (2.458) & 19.84 (0.484) & 10.1 (1.765) & 11.04 (3.347) & 10 (0.05) \\
$\mathcal{L}_{0.5, 1}$ & 18.93 (2.206) & 18.87 (3.037) & 19.94 (0.296) & 9.79 (1.895) & 10.34 (3.522) & 10 (0.022) \\
$\mathcal{L}_{1, 1}$ & 19.12 (1.748) & 19.11 (2.424) & 19.95 (0.246)  & 10.44 (2.096) & 11.35 (3.527) & 10.02 (0.144) \\
$\mathcal{L}_{0.5, 2}$ & 19.16 (2.128) & 19.15 (2.947) & 19.99 (0.109)  & 10.16 (2.25) & 10.73 (3.696) & 10 (0) \\
$\mathcal{L}_{1, 2}$ & 19.32 (1.785) & 19.36 (2.389) & 19.99 (0.08)  & 10.88 (2.581) & 11.75 (3.776) & 10.05 (0.23) \\
$\mathcal{J}_{0.5, 0.5}$ & 18.74 (2.314)  &18.99 (3.374)& 19.94 (0.301)&9.58 (1.924)& 10.96 (4.627)&  9.99 (0.083)\\
$\mathcal{J}_{1, 0.5}$ & 18.99 (1.832)  &19.20 (2.672)& 19.96 (0.246)&10.14 (2.116)& 12.04 (4.455)&  10.01 (0.118)\\
$\mathcal{J}_{0.5, 1}$ & 18.97 (2.211) & 19.05 (3.172) & 19.98 (0.172)  & 9.86 (2.046) & 10.9 (4.310) & 10 (0) \\
$\mathcal{J}_{1, 1}$ & 19.18 (1.728) & 19.29 (2.542) & 19.98 (0.149)  & 10.46 (2.291) & 11.9 (4.145) & 10.03 (0.168) \\
$\mathcal{J}_{0.5, 2}$ &  19.25 (2.198) &19.29 (3.015)& 20.00 (0.039)&10.32 (2.438)& 11.08 (4.201)&  10.00 (0.022)\\
$\mathcal{J}_{1, 2}$ & 19.41 (1.792)  &19.45 (2.443)& 20.00 (0.032)& 11.04 (2.762)& 12.02 (4.033)&  10.08 (0.312)\\ \midrule
$T^{(1)}_{0.5, 1}$ & 19.39 (2.321) & 19.73 (3.746) & 20 (0) & 11.04 (3.793) & 13.51 (6.22) & 10.02 (0.158) \\
$T^{(1)}_{0.5, 1.5}$ & 19.56 (2.44) & 19.81 (3.68) & 20 (0.022) & 11.27 (4.065) & 13.3 (6.202) & 10.04 (0.197) \\
$T^{(1)}_{0.5, 2}$ & 19.67 (2.529) & 19.97 (3.716) & 20 (0.022) & 11.52 (4.367) & 13.28 (6.156) & 10.05 (0.256) \\
$T^{(1)}_{0.5, 3}$ & 19.864 (2.663) & 20.167 (3.631) & 20 (0.014) & 12.003 (4.866) & 13.137 (6.006) & 10.059 (0.277)\\
$T^{(1)}_{0.5, 4}$ & 19.988 (2.813) & 20.281 (3.585) & 20 (0.014) & 12.246 (5.061) & 13.188 (6.034) & 10.073 (0.316)\\
$T^{(1)}_{1, 1}$ & 19.46 (1.875) & 19.76 (2.991) & 20 (0) & 12.01 (4.072) & 14.54 (5.742) & 10.12 (0.438) \\
$T^{(1)}_{1, 1.5}$ & 19.57 (2.054) & 19.81 (2.88) & 20 (0.022) & 12.15 (4.198) & 14.25 (5.689) & 10.17 (0.529) \\
$T^{(1)}_{1, 2}$ & 19.69 (2.144) & 19.95 (2.869) & 20 (0.022) & 12.22 (4.154) & 13.96 (5.509) & 10.22 (0.608) \\
$T^{(1)}_{1, 3}$ & 19.867 (2.214) & 20.107 (2.883) & 20 (0) & 12.719 (4.501) & 13.793 (5.459) & 10.255 (0.65) \\
$T^{(1)}_{1, 4}$ & 19.961 (2.316) & 20.195 (2.863) & 20 (0) & 12.924 (4.635) & 13.8 (5.449) & 10.283 (0.684)\\ 
$T^{(2)}_{0.5, 1}$ & 19.72 (2.664) & 19.91 (3.731) & 20 (0.022) & 11.72 (4.594) & 13.29 (6.214) & 10.06 (0.278) \\
$T^{(2)}_{0.5, 1.5}$ & 19.8 (2.838) & 20.05 (3.738) & 20 (0.022)  & 11.83 (4.687) & 13.28 (6.151) & 10.07 (0.297) \\
$T^{(2)}_{0.5, 2}$ & 19.9 (2.885) & 20.15 (3.68) & 20 (0.022) & 12.02 (4.859) & 13.3 (6.164) & 10.07 (0.304) \\
$T^{(2)}_{0.5, 3}$ & 19.545 (3.226) & 19.535 (5.513) & 20 (0.191) & 11.925 (5.266) & 14.219 (7.739) & 10.183 (0.543) \\ 
$T^{(2)}_{0.5, 4}$ & 19.574 (3.363) & 19.56 (5.625) & 20 (0.191) & 12.073 (5.45) & 14.249 (7.813) & 10.185 (0.545) \\
$T^{(2)}_{1, 1}$ & 19.71 (2.211) & 19.92 (3.019) & 20 (0.022)  & 12.38 (4.332) & 14.18 (5.702) & 10.25 (0.639) \\
$T^{(2)}_{1, 1.5}$ & 19.81 (2.290) & 20.02 (2.933) & 20 (0.022) & 12.54 (4.421) & 14.05 (5.635) & 10.28 (0.668) \\
$T^{(2)}_{1, 2}$ & 19.87 (2.296) & 20.04 (2.933) & 20 (0.022) & 12.59 (4.427) & 14.03 (5.592) & 10.32 (0.714) \\
$T^{(2)}_{1, 3}$ & 20.005 (2.851) & 20.293 (3.651) & 20 (0.014) & 12.33 (5.14) & 13.227 (6.07) & 10.077 (0.326)\\
$T^{(2)}_{1, 4}$ & 20.072 (2.92) & 20.35 (3.653) & 20 (0.017) & 12.433 (5.218) & 13.258 (6.101) & 10.081 (0.332)\\
\bottomrule
\end{tabular}
\end{table}

\begin{table}[htbp]
\caption{Estimates and standard deviations of change-point locations{, $n=100$}.}
\label{Gamma05Pozicija100}
\centering
\begin{tabular}{{@{}lccc@{}@{\hspace{5pt}}|@{\hspace{5pt}}ccc@{}}}\\
\toprule
& \multicolumn{3}{c}{$b=1, k=50$} & \multicolumn{3}{c}{$b=1, k=25$}\\\midrule
 & G1 & G2 & U  & G1 & G2 & U \\ \midrule
$\mathcal{L}_{0.5, 0.5}$ & 48.42 (2.792) & 46.98 (5.494) & 49.54 (0.977) & 24.17 (1.909) & 24.33 (5.44) & 24.83 (0.499) \\
$\mathcal{L}_{1, 0.5}$ & 48.6 (2.414) & 47.46 (4.476) & 49.57 (0.896) & 24.83 (1.88) & 26.44 (5.736) & 25.03 (0.236) \\
$\mathcal{L}_{0.5, 1}$ & 48.62 (2.644) & 47.31 (5.561) & 49.72 (0.674)  & 24.41 (2.077) & 24.8 (5.728) & 24.92 (0.299) \\
$\mathcal{L}_{1, 1}$ & 48.75 (2.336) & 47.77 (4.47) & 49.73 (0.643) & 25.21 (2.247) & 27.12 (6.456) & 25.09 (0.337) \\
$\mathcal{L}_{0.5, 2}$ & 48.84 (2.584) & 47.62 (5.706) & 49.83 (0.49) & 24.76 (2.464) & 25.74 (6.886) & 24.99 (0.202) \\
$\mathcal{L}_{1, 2}$ & 48.91 (2.404) & 48.17 (4.546) & 49.83 (0.474) & 25.85 (3.11) & 27.9 (7.185) & 25.17 (0.47) \\
$\mathcal{J}_{0.5, 0.5}$ &  48.56 (2.592) &47.44 (5.679)& 49.72 (0.683)&24.25 (1.819)& 25.41 (7.062)& 24.90 (0.356)  \\
$\mathcal{J}_{1, 0.5}$ &  48.69 (2.292)&47.85  (4.639)& 49.73 (0.652)& 24.77 (1.746)& 27.77 (7.950)&25.06 (0.281) \\
$\mathcal{J}_{0.5, 1}$ & 48.7 (2.518) & 47.47 (5.671) & 49.78 (0.584) & 24.46 (1.969) & 25.34 (6.668) & 24.96 (0.250) \\
$\mathcal{J}_{1, 1}$ & 48.83 (2.234) & 47.98 (4.552) & 49.79 (0.543) & 25.17 (2.134) & 27.72 (7.331) & 25.12 (0.388) \\ 
$\mathcal{J}_{0.5, 2}$ &  48.94 (2.577)&47.75 (5.799)& 49.89 (0.372)& 24.90 (2.580)& 26.18 (7.416)& 25.03 (0.220) \\
$\mathcal{J}_{1, 2}$ & 49.00 (2.441) & 48.23 (4.619) & 49.89 (0.364)& 26.17 (3.635)& 28.21 (7.590)& 25.23 (0.600) \\ \midrule
$T^{(1)}_{ 0.5, 1}$ & 49.07 (2.208) & 48.41 (6.485) & 50.01 (0.156)  & 25.02 (3.155) & 29.35 (11.845) & 25.11 (0.384) \\
$T^{(1)}_{ 0.5, 1.5}$ & 49.16 (2.401) & 48.63 (6.707) & 50.01 (0.141) & 25.32 (3.694) & 29.26 (11.634) & 25.12 (0.41) \\
$T^{(1)}_{ 0.5, 2}$ & 49.25 (2.631) & 48.72 (6.733) & 50.01 (0.139)  & 25.61 (4.057) & 29.12 (11.325) & 25.13 (0.422) \\

$T^{(1)}_{ 0.5, 3}$ & 49.16 (3.254)  & 49.13 (7.011)  &  50 (0.159)  &  25.88 (4.433) & 29.24 (11.379)  & 25.16 (0.483)  \\
$T^{(1)}_{ 0.5, 4}$ & 49.24 (3.512)  & 49.28 (7.213)  &  50 (0.158)  &  26.23 (5.024) & 29.43 (11.661)  & 25.17 (0.505)   \\

$T^{(1)}_{ 1, 1}$ & 49.14 (2.028) & 48.61 (5.066) & 50.01 (0.145) & 25.88 (3.619) & 31.82 (11.452) & 25.32 (0.808) \\
$T^{(1)}_{ 1, 1.5}$ & 49.21 (2.214) & 48.7 (5.125) & 50.01 (0.138) & 26.46 (4.365) & 31.43 (11.113) & 25.36 (0.862) \\
$T^{(1)}_{ 1, 2}$ & 49.3 (2.42) & 48.87 (5.218) & 50.01 (0.138) & 26.88 (4.941) & 31.57 (11.11) & 25.42 (0.952) \\

$T^{(1)}_{ 1, 3}$ & 49.27 (2.834)  & 49.26 (5.546)  & 50 (0.153)   & 27.42 (5.454)  & 31.21 (10.693) & 25.47 (1)  \\
$T^{(1)}_{ 1, 4}$ &  49.34 (3.044) & 49.35 (5.748)  &  50 (0.151)  & 27.83 (5.873)  &  31.46 (10.955) &  25.49 (1.027) \\

$T^{(2)}_{ 0.5, 1}$ & 49.27 (2.843) & 48.77 (7.04) & 50.01 (0.139) & 25.75 (4.266) & 29.28 (11.505) & 25.14 (0.433) \\
$T^{(2)}_{ 0.5, 1.5}$ & 49.32 (3.039) & 48.87 (7.045) & 50.01 (0.139) & 26.1 (4.871) & 29.58 (11.741) & 25.15 (0.436) \\
$T^{(2)}_{ 0.5, 2}$ & 49.41 (3.21) & 49.01 (7.284) & 50.01 (0.139) & 26.3 (5.218) & 29.73 (11.839) & 25.15 (0.445) \\

$T^{(2)}_{ 0.5, 3}$ & 49.26 (3.583)  &  49.32 (7.307) &  50 (0.158)  &  26.35 (5.279) & 29.53 (11.770)  &   25.17 (0.509)\\
$T^{(2)}_{ 0.5, 4}$ & 49.30 (3.727)  & 49.31 (7.440)  & 50 (0.157)   &  26.53 (5.626) & 29.61 (11.790)  & 25.17 (0.513)   \\

$T^{(2)}_{ 1, 1}$ & 49.35 (2.525) & 48.87 (5.391) & 50.01 (0.13) & 27.21 (5.338) & 31.62 (11.235) & 25.45 (0.979) \\
$T^{(2)}_{ 1, 1.5}$ & 49.4 (2.743) & 48.95 (5.697) & 50.01 (0.128) & 27.66 (5.817) & 31.74 (11.401) & 25.47 (1.008) \\
$T^{(2)}_{ 1, 2}$ & 49.46 (2.837) & 49.08 (5.778) & 50.01 (0.128) & 27.92 (6.128) & 31.83 (11.592) & 25.48 (1.024) \\ 

$T^{(2)}_{ 1, 3}$ & 49.35 (3.124)  &  49.37 (5.823) &  50 (0.150)  & 27.98 (6.066)  &  31.54 (11.055) & 25.5 (1.045)  \\
$T^{(2)}_{ 1, 4}$ &  49.38 (3.231) & 49.4 (5.972)  & 50 (0.15)   & 28.23 (6.386)  & 31.71 (11.218)  &  25.51 (1.052) \\

\bottomrule

\end{tabular}
\end{table}

\begin{table}[htbp]\ContinuedFloat
\caption{Estimates and standard deviations of change-point locations{, $n=100$ (continued)}.}
\centering
\begin{tabular}{{@{}lccc@{}@{\hspace{5pt}}|@{\hspace{5pt}}ccc@{}}}\\
\toprule
& \multicolumn{3}{c}{$b=\sqrt{2}, k=50$} & \multicolumn{3}{c}{$b=\sqrt{2}, k=25$} \\\midrule
 & G1 & G2 & U  & G1 & G2 & U  \\ \midrule
$\mathcal{L}_{0.5, 0.5}$ & 48.55 (2.586) & 48.25 (3.725) & 49.84 (0.496) & 24.27 (1.825) & 24.64 (3.197) & 24.99 (0.077) \\
$\mathcal{L}_{1, 0.5}$ & 48.69 (2.259) & 48.48 (3.124) & 49.85 (0.435)  & 24.92 (1.768) & 26.12 (3.815) & 25 (0.032) \\
$\mathcal{L}_{0.5, 1}$ & 48.83 (2.373) & 48.56 (3.531) & 49.95 (0.257) & 24.56 (1.853) & 24.99 (3.298) & 25 (0) \\
$\mathcal{L}_{1, 1}$ & 48.97 (2.115) & 48.75 (3.034) & 49.96 (0.24) & 25.3 (2.193) & 26.53 (4.288) & 25.01 (0.095) \\
$\mathcal{L}_{0.5, 2}$ & 49.1 (2.299) & 48.91 (3.414) & 50 (0.045)  & 24.87 (2.136) & 25.53 (3.781) & 25 (0) \\
$\mathcal{L}_{1, 2}$ & 49.19 (2.109) & 49.03 (2.952) & 50 (0.039) & 25.84 (2.858) & 27.07 (4.773) & 25.05 (0.229) \\
$\mathcal{J}_{0.5, 0.5}$ & 48.62 (2.530) & 48.65 (3.932)& 49.97 (0.219)& 24.30 (1.866)& 25.41 (4.319)& 25 (0.022) \\
$\mathcal{J}_{1, 0.5}$ &  48.74 (2.211)& 48.86 (3.246)& 49.97 (0.184)& 24.88 (1.763)& 27.00 (5.238)&25.01 (0.077)  \\
$\mathcal{J}_{0.5, 1}$ & 48.89 (2.322) & 48.76 (3.532) & 49.99 (0.074) & 24.58 (1.963) & 25.42 (3.877) & 25 (0) \\
$\mathcal{J}_{1, 1}$ & 48.97 (2.143) & 48.9 (3.210) & 49.99 (0.074)  & 25.24 (2.187) & 26.91 (4.799) & 25.01 (0.118) \\ 
$\mathcal{J}_{0.5, 2}$ & 49.16 (2.337) & 48.99 (3.467)& 50 (0)& 25 (2.345)&25.77 (4.070) &25 (0)  \\
$\mathcal{J}_{1, 2}$ & 49.24 (2.118) &49.13 (3.124) & 50 (0)& 25.98 (3.047)& 27.30 (5.003) & 25.08 (0.308) \\\midrule
$T^{(1)}_{ 0.5, 1}$ & 49.05 (2.427) & 49.35 (4.224) & 50 (0) & 25.04 (2.855) & 27.44 (7.333) & 25.01 (0.111) \\
$T^{(1)}_{ 0.5, 1.5}$ & 49.24 (2.484) & 49.56 (4.103) & 50 (0) & 25.31 (3.261) & 27.47 (7.043) & 25.02 (0.158) \\
$T^{(1)}_{ 0.5, 2}$ & 49.37 (2.598) & 49.64 (4.17) & 50 (0)  & 25.6 (3.671) & 27.61 (7.106) & 25.04 (0.2) \\

$T^{(1)}_{ 0.5, 3}$ & 49.71 (2.757) & 49.91 (4.095)  &  50 (0)  & 26.3 (4.619)  & 27.67 (7.200)  & 25.06 (0.265) \\
$T^{(1)}_{ 0.5, 4}$ & 49.89 (2.896) & 50.12 (4.054) &  50 (0)  & 26.64 (5.023)  & 27.84 (7.194)  & 25.07 (0.293) \\

$T^{(1)}_{ 1, 1}$ & 49.11 (2.266) & 49.42 (3.529) & 50 (0) & 26.05 (3.784) & 29.67 (8.305) & 25.12 (0.403) \\
$T^{(1)}_{ 1, 1.5}$ & 49.32 (2.244) & 49.55 (3.517) & 50 (0) & 26.59 (4.502) & 29.52 (7.876) & 25.17 (0.492) \\
$T^{(1)}_{ 1, 2}$ & 49.45 (2.273) & 49.67 (3.502) & 50 (0) & 27.05 (5.048) & 29.46 (7.623) & 25.2 (0.552) \\
$T^{(1)}_{ 1, 3}$ & 49.76 (2.426) & 49.94 (3.597)  &  50 (0)  &  27.82 (5.644) & 29.37 (7.718)  & 25.26 (0.654)  \\
$T^{(1)}_{ 1, 4}$ & 49.89 (2.561) & 50.09 (3.569) &  50 (0)  & 28.22 (5.999)  & 29.57 (7.779)  & 25.29 (0.703) \\

$T^{(2)}_{ 0.5, 1}$ & 49.39 (2.778) & 49.65 (4.346) & 50 (0) & 25.77 (4.02) & 27.7 (7.422) & 25.05 (0.227) \\
$T^{(2)}_{ 0.5, 1.5}$ & 49.56 (2.774) & 49.75 (4.25) & 50 (0) & 25.99 (4.162) & 27.86 (7.365) & 25.06 (0.263) \\
$T^{(2)}_{ 0.5, 2}$ & 49.68 (2.729) & 49.83 (4.129) & 50 (0) & 26.23 (4.739) & 27.81 (7.194) & 25.06 (0.272) \\

$T^{(2)}_{ 0.5, 3}$ & 49.91 (2.99) & 50.08 (4.157) & 50 (0)   &  26.73 (5.186) & 27.87 (7.324) & 25.07 (0.303) \\
$T^{(2)}_{ 0.5, 4}$ & 50 (3.077) & 50.2 (4.115) &  50 (0)  & 26.93 (5.39)  & 27.96 (7.286)  & 25.08 (0.308)  \\

$T^{(2)}_{ 1, 1}$ & 49.48 (2.387) & 49.64 (3.593) & 50 (0)  & 27.25 (5.273) & 29.7 (8.063) & 25.23 (0.611) \\
$T^{(2)}_{ 1, 1.5}$ & 49.66 (2.414) & 49.77 (3.615) & 50 (0) & 27.48 (5.419) & 29.56 (7.684) & 25.27 (0.677) \\
$T^{(2)}_{ 1, 2}$ & 49.73 (2.482) & 49.82 (3.716) & 50 (0)  & 27.69 (5.781) & 29.69 (7.708) & 25.29 (0.696) \\
$T^{(2)}_{ 1, 3}$ & 49.92 (2.603) & 50.07 (3.661) & 50 (0)    & 28.31 (6.111)  & 29.57 (7.832)  & 25.31 (0.728)  \\
$T^{(2)}_{ 1, 4}$ & 49.99 (2.662) & 50.15 (3.615) &  50 (0)  &  28.55 (6.340) & 29.70 (7.885)  & 25.32 (0.749) \\
\bottomrule
\end{tabular}
\end{table}

\begin{table}[htbp]
\caption{$p$-values of novel tests - U.S. GNP data}
\label{pvalUSGNP}\centering
\begin{tabular}{@{}lllllllll@{}}\\
\toprule 
Statistic & $\mathcal{J}_{0.5, 0.5}$ & $\mathcal{J}_{1, 0.5}$ & $\mathcal{J}_{0.5, 1}$ & $\mathcal{J}_{1, 1}$ & $\mathcal{J}_{0.5, 2}$ & $\mathcal{J}_{1, 2}$ & $\mathcal{L}_{0.5, 0.5}$ & $\mathcal{L}_{1, 0.5}$ \\ \midrule
$p$-values & 0.0050 & 0.0040 & 0.0110 & 0.0030 & 0.0090 & 0.0010 & 0.0020 & 0.0020 \\
Position & 1985 (3) & 1985 (3) & 1985 (3) & 1985 (3) & 1984 (2) & 1984 (2) & 1985 (3) & 1985 (3) \\\midrule 
Statistic & $\mathcal{L}_{0.5, 1}$ & $\mathcal{L}_{1, 1}$ & $\mathcal{L}_{0.5, 2}$ & $\mathcal{L}_{1, 2}$ & $T^{(1)}_{0.5, 1}$ & $T^{(1)}_{0.5, 1.5}$ & $T^{(1)}_{0.5, 2}$ & $T^{(1)}_{0.5, 3}$ \\ \midrule
$p$-values & 0.0040 & 0.0050 & 0.0040 & 0.0060 & 0.0080 & 0.0050 & 0.0090 & 0.0120 \\
Position & 1985 (3) & 1985 (3) & 1985 (3) & 1985 (3) & 1984 (2) & 1984 (2) & 1984 (2) & 1984 (2) \\\midrule 
Statistic & $T^{(1)}_{0.5, 4}$ & $T^{(1)}_{1, 1}$ & $T^{(1)}_{1, 1.5}$ & $T^{(1)}_{1, 2}$ & $T^{(1)}_{1, 3}$ & $T^{(1)}_{1, 4}$ & $T^{(2)}_{0.5, 1}$ & $T^{(2)}_{0.5, 1.5}$ \\ \midrule
$p$-values & 0.0140 & 0.0010 & 0.0050 & 0.0020 & 0.0050 & 0.0020 & 0.0100 & 0.0070 \\
Position & 1984 (2) & 1984 (2) & 1984 (2) & 1984 (2) & 1984 (2) & 1984 (2) & 1984 (2) & 1984 (2) \\\midrule 
Statistic & $T^{(2)}_{0.5, 2}$ & $T^{(2)}_{0.5, 3}$ & $T^{(2)}_{0.5, 4}$ & $T^{(2)}_{1, 1}$ & $T^{(2)}_{1, 1.5}$ & $T^{(2)}_{1, 2}$ & $T^{(2)}_{1, 3}$ & $T^{(2)}_{1, 4}$ \\ \midrule
$p$-values & 0.0070 & 0.0100 & 0.0100 & 0.0030 & 0.0020 & 0.0070 & 0.0050 & 0.0040 \\
Position & 1984 (2) & 1984 (2) & 1984 (2) & 1984 (2) & 1984 (2) & 1984 (2) & 1984 (2) & 1984 (2) \\
\bottomrule
\end{tabular}
\end{table}

\begin{table}[htbp]
\caption{$p$-values - Argentina rainfall data}
\label{pvalArgentina} \centering
\begin{tabular}{@{}lllllllll@{}}\\
\toprule 
Statistic & $\mathcal{J}_{0.5, 0.5}$ & $\mathcal{J}_{1, 0.5}$ & $\mathcal{J}_{0.5, 1}$ & $\mathcal{J}_{1, 1}$ & $\mathcal{J}_{0.5, 2}$ & $\mathcal{J}_{1, 2}$ & $\mathcal{L}_{0.5, 0.5}$ & $\mathcal{L}_{1, 0.5}$ \\ \midrule
$p$-values & 0.0080 & 0.0160 & 0.0030 & 0.0090 & 0.0030 & 0.0080 & 0.0030 & 0.0030 \\
Position & 1956 & 1956 & 1956 & 1956 & 1956 & 1956 & 1956 & 1956 \\\midrule 
Statistic & $\mathcal{L}_{0.5, 1}$ & $\mathcal{L}_{1, 1}$ & $\mathcal{L}_{0.5, 2}$ & $\mathcal{L}_{1, 2}$ & $T^{(1)}_{0.5, 1}$ & $T^{(1)}_{0.5, 1.5}$ & $T^{(1)}_{0.5, 2}$ & $T^{(1)}_{0.5, 3}$ \\ \midrule
$p$-values & 0.0020 & 0.0040 & 0.0030 & 0.0020 & 0.0659 & 0.0939 & 0.0959 & 0.0430 \\
Position & 1956 & 1956 & 1956 & 1956 & 1910 & 1910 & 1953 & 1956 \\\midrule 
Statistic & $T^{(1)}_{0.5, 4}$ & $T^{(1)}_{1, 1}$ & $T^{(1)}_{1, 1.5}$ & $T^{(1)}_{1, 2}$ & $T^{(1)}_{1, 3}$ & $T^{(1)}_{1, 4}$ & $T^{(2)}_{0.5, 1}$ & $T^{(2)}_{0.5, 1.5}$ \\ \midrule
$p$-values & 0.0210 & 0.2118 & 0.1179 & 0.0779 & 0.0539 & 0.0330 & 0.0519 & 0.0719 \\
Position & 1956 & 1953 & 1953 & 1953 & 1956 & 1956 & 1910 & 1914 \\\midrule 
Statistic & $T^{(2)}_{0.5, 2}$ & $T^{(2)}_{0.5, 3}$ & $T^{(2)}_{0.5, 4}$ & $T^{(2)}_{1, 1}$ & $T^{(2)}_{1, 1.5}$ & $T^{(2)}_{1, 2}$ & $T^{(2)}_{1, 3}$ & $T^{(2)}_{1, 4}$ \\ \midrule
$p$-values & 0.0939 & 0.0929 & 0.1039 & 0.1548 & 0.1319 & 0.1139 & 0.0719 & 0.0749 \\
Position & 1914 & 1953 & 1953 & 1914 & 1953 & 1953 & 1953 & 1953 \\
\bottomrule
\end{tabular}
\end{table}

\newpage

\begin{sidewaystable}
\caption{Detected change-points in S\&P~500 absolute log returns for different values of $\gamma$.}
\label{tab:sp500cp}
  \begin{tabular}{@{}lllp{6in}}

  \toprule
  \# & Date & Detected by & Event  \\
  \midrule
  1 & 2018-02-01 & All ($\gamma=0.5$ and $1$) & ``Volmageddon'': the S\&P~500 fell $\approx$4\% and the VIX surged $\approx$20 points on 5 February (its largest single-day jump since 1987) as volatility-linked ETPs unwound and rising-rate fears intensified \cite{AMF2018VIX, BIS2018Q1}. \\
  2 & 2018-04-17 & All ($\gamma=0.5$ and $1$) & Syria-strike risk, Russia tensions, bank earnings, and trade-war concerns drove elevated volatility around 12--16 April; easing geopolitical fears positive earnings subsequently stabilised markets \cite{CNBC2018April, Reuters2018Syria}.  \\
  3 & 2018-07-12 & All ($\gamma=0.5$ and $1$) & U.S.\ tariffs on \$34bn of Chinese goods took effect on 6 July and China retaliated immediately; Bloomberg's subsequent report of a further \$200bn tariff list renewed trade-war volatility \cite{Bloomberg2018AsianStocks, CNBC2018Tariffs}.  \\
  4 & 2018-10-09 & All ($\gamma=0.5$ and $1$) & On 10 October the S\&P~500 fell 3.3\%, the Dow dropped $\approx$830 points, and the Nasdaq~100 posted its worst session since 2011; October
  became the S\&P~500's worst month since 2011 with nearly \$2tn in market value lost \cite{Bloomberg2018October, CNBC2018Stock}.  \\
  5 & 2019-01-22 & \shortstack[l]{All ($\gamma=0.5$ and $1$) \\ exc. $\mathcal{L}_{\cdot,1}$, $\mathcal{L}_{\cdot,1.5}$}& Equities rebounded sharply from the late-2018 sell-off as the Fed signaled patience on rate hikes and U.S.--China trade optimism improved, producing the strongest January rally in years \cite{Reuters2019FedPivot, Schroders2019January}.  \\
  6 & 2019-01-31 & $\mathcal{L}_{\cdot,1}$, $\mathcal{L}_{\cdot,1.5}$ only & Same regime shift detected six trading days later: the S\&P~500 posted its best January since 1987 after strong earnings and the Fed's explicit ``patient'' guidance on rate hikes \cite{CNBC2019January}.  \\
  7 & 2020-02-20 & All ($\gamma=0.5$ and $1$)& A sudden risk-off sell-off on 20 February - one day after the S\&P~500 and Nasdaq hit all-time highs - marked the onset of the
  COVID-19 market crash, which deepened into a month-long pandemic-driven collapse \cite{CNBC2020February, Reuters2020YearHistory}.  \\
  8 & 2020-04-06 & All ($\gamma = 1$) & U.S. equities staged a sharp relief rally as investors reacted to early signs that COVID-19 deaths/infections were slowing in some hard-hit regions, including New York, Italy and Spain. The S\&P 500 rose about 7\%, its biggest jump in nearly two weeks \cite{Reuters2020WallStreet}.\\
  9 & 2020-04-22 & All ($\gamma=0.5$) & WTI crude futures turned negative for the first time on 20 April as pandemic lockdowns crushed demand and storage capacity filled; the S\&P~500 fell 1.8\% and energy stocks dropped sharply \cite{Bloomberg2020Oil}.  \\
  10 & 2020-06-30 & All ($\gamma = 1$) & End of Q2 2020 rebound; S\&P~500 posts best quarter since 1998 amid stimulus support, reopening optimism, renewed COVID case concerns, and U.S.--China tensions \cite{Reuters2020BestQuarter}.\\
  11 & 2020-11-24  & All ($\gamma=0.5$)& U.S.\ stocks rallied strongly on 24 November, with the Dow crossing 30{,}000 for the first time, as vaccine progress and formal clearance for President-elect Biden's transition drove a broad reopening rotation \cite{Reuters2020Dow30k, Reuters2020Goldman}.  \\
  12 & 2021-03-11 & All ($\gamma = 1$) & Biden signed the \$1.9 trillion American Rescue Plan; S\&P 500 and Dow closed at record highs; tech rebounded while recovery/value rotation remained active \cite{CNBC2021Stimulus, Reuters2021RecordHighs}.\\
  13 & 2021-05-20 & All ($\gamma=0.5$) & Markets fell for three consecutive sessions after Fed minutes revealed policymakers discussing tapering crisis-era asset purchases, compounded by inflation fears and a sharp cryptocurrency sell-off \cite{Reuters2021Taper}.  \\
  14 & 2021-11-24 & All ($\gamma=0.5$ and $1$) & The Omicron variant and Powell's hawkish pivot drove late-November
  volatility; the S\&P~500 recorded its first losing November since 2011 as investors repriced aggressive Fed tightening \cite{ Bloomberg2021Omicron, Reuters2021Omicron}.  \\
  15 & 2022-01-14 &  All ($\gamma=0.5$) & CPI at 7\%, accelerated Fed-hike expectations, and a tech/growth sell-off pushed the S\&P~500 into correction territory and drove the VIX to a 52-week high during a ``wild ride'' month of trading \cite{Nasdaq2022January, Reuters2022WildRide}.  \\
  16 & 2022-11-10 & All ($\gamma=0.5$ and $T^{(1)}_{1, 2}$) & A softer-than-expected October CPI print triggered the S\&P~500's largest single-day gain since April 2020 ($+$5.54\%) and the Nasdaq's largest since 2020 ($+$7.35\%), as investors bet on a less aggressive Fed path \cite{CNBC2022CPI, Reuters2022CPI}.  \\
  17 & 2023-03-22 & All ($\gamma=0.5$ and $1$) & The Fed raised rates 25\,bps amid global banking turmoil following the SVB and Signature Bank failures while signalling an imminent pause; SVB's collapse simultaneously drove the MOVE index to levels last seen during the 2008 financial crisis \cite{Reuters2023SVB, Reuters2023FedHike}.  \\
  \bottomrule
  \end{tabular}%
\end{sidewaystable}

\newpage

\begin{algorithm}
\caption{Binary segmentation with permutation bootstrap}
\label{BSAl}
\begin{algorithmic}[1]
\Require \(x_{1:n}\), \(B\), \(\mathcal{S}\), \(\gamma\), \(a\), 
         \(\alpha\), \(w\), \(o\), \(d\), \(d_{\max}\)
\Ensure Set of detected change-points

\If{\(n < 2w\) \textbf{or} \(d > d_{\max}\)}
    \State \Return \(\emptyset\)
\EndIf

\State
\[
T_{\mathrm{obs}}
=
\max_{w \le k \le n-w}
\mathcal{S}_{\gamma,a}(x_{1:n},k),
\qquad
k^*
=
\arg\max_{w \le k \le n-w}
\mathcal{S}_{\gamma,a}(x_{1:n},k)
\]

\For{\(b=1,\ldots,B\)}
    \State Generate a random permutation \(\pi_b\) of \(\{1,\ldots,n\}\)
    \State Set \(x^{*(b)}_{1:n} \leftarrow (x_{\pi_b(1)},\ldots,x_{\pi_b(n)})\)
    \State
    \[
    T_b^*
    =
    \max_{w \le k \le n-w}
    \mathcal{S}_{\gamma,a}(x^{*(b)}_{1:n},k)
    \]
\EndFor

\State
\[
\hat p
=
\frac{1}
{B+1}\Big(\sum_{b=1}^{B}
\mathbf{1}\{T_b^* \ge T_{\mathrm{obs}}\}+1\Big)
\]

\If{\(\hat p \ge \alpha\)}
    \State \Return \(\emptyset\)
\EndIf

\State \(c \leftarrow (o+k^*,\, o+1,\, o+n,\, T_{\mathrm{obs}},\, \hat p,\, d)\)

\State \(\mathcal{L} \leftarrow
BS(x_{1:k^*}, B, \mathcal{S}, \gamma, a, \alpha, w, o, d+1, d_{\max})\)

\State \(\mathcal{R} \leftarrow
BS(x_{k^*+1:n}, B, \mathcal{S}, \gamma, a, \alpha, w, o+k^*, d+1, d_{\max})\)

\State \Return \(\mathrm{sort}\{\mathrm{unique}(\{c\} \cup \mathcal{L} \cup \mathcal{R})\}\)

\end{algorithmic}
\end{algorithm}

\clearpage

\section{Appendix B}\label{appendixB}

\begin{proof}[Sketch of the proof of Theorem 1] 
 The proof follows the same methodology as that outlined in \cite{huvskova2006change} and \cite{nas2}. 
Here, we provide a brief sketch of the proof for the Hankel-based statistic $\mathcal{J}_{n,\gamma,a}$; the proof for $\mathcal{L}_{n,\gamma,a}$ follows the same lines and is even simpler due to the form of the Laplace transform, which does not involve special functions. {For the remainder of this proof, it is assumed that integration is performed over the maximal domain on which the functions are defined; integration limits are omitted henceforth.}


To obtain the limiting null distribution of the test statistic, we first note that $\mathcal{J}_{n,\gamma,w}$ can be represented as  
$\mathcal{J}_{n, \gamma,w}=\max\limits_{1\leq k< n} c_{n, k}(\gamma)J_{k, n-k,a}$, where
{$c_{n, k}(\gamma)=\Big(\frac{k(n-k)}{n^2}\Big)^\gamma \frac{k(n-k)}{n}$, and

}


{\begin{align*}
   J_{k, n-k,a}&= \frac{1}{k^2a}\sum\limits_{l=1}^k\sum\limits_{m=1}^k I_0\left(\frac{2\sqrt{X_lX_m}}{a}\right)\exp\left(\frac{-(X_l+X_m)}{a}\right)+\\&\frac{1}{(n-k)^2a}\sum\limits_{l=k+1}^n\sum\limits_{m=k+1}^n I_0\left(\frac{2\sqrt{X_lX_m}}{a}\right)\exp\left(\frac{-(X_l+X_m)}{a} \right)-\\&\frac{2}{k(n-k)a}\sum\limits_{l=1}^k\sum\limits_{m=k+1}^n I_0\left(\frac{2\sqrt{X_lX_m}}{a}\right)\exp\left(\frac{-(X_l+X_m)}{a}\right).
\end{align*}
}

The statistic  $\mathcal{L}_{n,\gamma,a}$ can be represented in an analogous way.
{
Both transforms considered here can be represented in the form $E(A(tX))$. For the Laplace transform, $A(tx)=e^{-tx}$, while for the Hankel transform, $A(tx)=J_0(2\sqrt{tx})$. 
Therefore, we provide general asymptotic results for an arbitrary integral transform of this form.

Let

{\begin{align*}
   J_{k, n-k,w}&= \frac{1}{k^2}\sum\limits_{l=1}^k\sum\limits_{m=1}^k\int A(tX_l)A(tX_m)w(t)dt+\\&\frac{1}{(n-k)^2}\sum\limits_{l=k+1}^n\sum\limits_{m=k+1}^n\int A(tX_l)A(tX_m)w(t)dt-\\&\frac{2}{k(n-k)}\sum\limits_{l=1}^k\sum\limits_{m=k+1}^n \int A(tX_l)A(tX_m)w(t)dt.
\end{align*}
}

}

{
Let us also consider
$q_w(x,y)=\int A(tx)A(ty)w(t)dt$, for which assumption $E{q}_w^2(X_1, X_2)<\infty$ holds.} A sufficient condition for this is $\mathbb{E}\!\left[\|A(\cdot X_1)\|_{w}^2\right]<\infty$.

Notice that {these conditions are satisfied} for both the Laplace and Hankel transforms.

{
Let $\Tilde{q}_w(x, y)=q_w(x, y)-E(q_w(X_1, X_2)|X_1=x)-E(q_w(X_1, X_2)|{X_2=y})+Eq_w(X_1, X_2). Then $

we have

\begin{equation}\label{nejednakosti}
E(\Tilde{q}_w(X_1, X_2)|X_1)=E\Tilde{q}_w(X_1, X_2)=0.
\end{equation}

}

{

It should be noted that the functions $q_w(x, y)$ and $\Tilde{q}_w(x, y)$ are symmetric in their arguments.

Since $E\Tilde{q}_w^2(X_1, X_2)<\infty$, there exist orthogonal eigenfunctions ${f^w_s(t),\; s=1, 2, \dots}$ and corresponding eigenvalues {${\lambda^w_s,\; s=1, 2, \dots}$} such that the following spectral approximation holds; see \cite{borovskikh1994theory, serfling2009approximation}: }

{

\begin{align}
   \lim  \limits_{K\to\infty} \int \int (\Tilde{q}_w(x_1, x_2)-\sum\limits_{s=1}^K\lambda^w_s f^w_s(x_1)f^w_s(x_2))^2 dF(x_1)dF(x_2)=0,\label{eigenvalues}
   \end{align}

}

   where
 {
  \begin{align*}
  & \int {(f^{w}_s(x_1))^2}dF(x_1)=1, \; s=1, 2, \dots;\\
  & \int f_r^w(x_1)f_s^w(x_1)dF(x_1)=0, \;r\neq s=1, 2, \dots.
 \end{align*} 
 }
 
 Moreover,

{
\begin{align}\label{eigen}
  &E\Tilde{q}_w^2(X_1,X_2) =\sum\limits_{j=1}^\infty (\lambda_j^w)^2.
\end{align}
}
{ Furthermore, \(q_w\) is a positive semidefinite kernel, since
\[
q_w(x,y)
=
\langle A(\cdot x),A(\cdot y)\rangle_w.
\]
Its centered version \(\widetilde q_w\) is also positive semidefinite.
Consequently,
$\lambda_s^w\geq 0,\qquad s=1,2,\ldots.$

}
{
In addition, if 
$E q_w^2(X_1,X_1)<\infty$ holds, ( for which a sufficient condition is $E\|A(\cdot X_1)\|_w^4<\infty)$, we can conclude that }
{ $\widetilde q_w$ is positive semidefinite and
$E\widetilde q_w(X_1,X_1)<\infty$, the corresponding integral
operator is of a trace class. Therefore, in this case,
\begin{align}\label{trace}
\sum_{s=1}^{\infty}\lambda_s^w
=
E\widetilde q_w(X_1,X_1)
=
E q_w(X_1,X_1)-E q_w(X_1,X_2),
\end{align}
which allows the further simplification of \eqref{asimptotika1} and \eqref{asimptotika2}.
}

We now divide the proof into several steps. 

In Step (I), we decompose the test statistic into four parts and show that two of them do not affect the limiting distribution. 
This is established by repeatedly applying the H\'ajek-R\'enyi inequality to control the relevant terms.

In Step (II), we determine the finite-dimensional distributions of the limiting process by establishing an approximation formula and applying Donsker's theorem.

In Step (III), using the properties of the Wiener process, we determine the desired asymptotic distribution and complete the proof. 
\medskip

\noindent\underline{Step (I):} The following decomposition holds:
{
\begin{equation*}
    J_{k, n-k,w}=C_{k1}+C_{k2}+C_{k3}+C_{k4},
\end{equation*}

}
where 

{

\begin{align*}
    C_{k1}&=\frac{n}{k(n-k)}\Big(\frac{1}{k}\sum\limits_{v=1}^k\sum\limits_{\substack{s=1\\ s\neq v}}^k \Tilde{q}_w(X_v, X_s)+\frac{1}{n-k}\sum\limits_{v=k+1}^n\sum\limits_{\substack{s=k+1\\ s\neq v}}^n \Tilde{q}_w(X_v, X_s)-\frac{1}{n}\sum\limits_{v=1}^n\sum\limits_{\substack{s=1\\ s\neq v}}^n \Tilde{q}_w(X_v, X_s)\Big), \\
    C_{k2}&=\frac{n}{k(n-k)}\Big({ E}\int A^2(tX_1)w(t)dt-Eq_w(X_1, X_2)\Big),\\
    C_{k3}&=-\frac{2}{k^2}\sum\limits_{r=1}^k(E(q_w(X_r, X)|X_r)-Eq_w(X_1, X_2))\\&-\frac{2}{(n-k)^2}\sum\limits_{r=k+1}^n (E(q_w(X_r, X)|X_r)-Eq_w(X_1, X_2)),\\
    C_{k4}&=\frac{1}{k^2}\sum\limits_{r=1}^k \Big(\int A^2(tX_r)w(t)dt)-E\int A^2(tX_r)w(t)dt\Big)\\&+\frac{1}{(n-k)^2}\sum\limits_{r=k+1}^n \Big(\int A^2(tX_r)w(t)dt)\!-\!E\int A^2(tX_r)w(t)dt\Big),
\end{align*}
and $X$ is a random variable independent of $X_r$ with distribution $F_1$.

}

We begin by showing that
$C_{k3}$ and $C_{k4}$ do not affect the limiting distribution. 

The random variables

{

\begin{equation*}
    L_{r;w}=E(q_w(X_r, X)|X_r)-Eq_w(X_1, X_2), r=1, 2, \dots, n
\end{equation*}

}
are independent and identically distributed. 
Moreover, under the condition $\mathbb E\,\widetilde q_w^{\,2}(X_1,X_2)<\infty$, they have finite variance.


By applying the H\'ajek-R\'enyi inequality, we obtain that, for every positive constant $A$ and every $\gamma \in [0, 1]$, the following inequality holds:

{

\begin{equation*}
    P_{H_0}\Big(\max\limits_{1\leq k < n} c_{k, n}(\gamma) \frac{2}{k^2}\Big|\sum\limits_{r=1}^k L_{r;w}\Big | \geq A\Big)\leq D_3n^{-\min({2\gamma}, 1)}(1+Ind(\gamma\geq\frac{1}{2})\log(n)).
\end{equation*}
}


{

\begin{equation*}
    P_{H_0}\Big(\max\limits_{1\leq k < n} c_{k, n}(\gamma) \frac{2}{(n-k)^2}\Big|\sum\limits_{r=k+1}^n L_{r;w}\Big | \geq A\Big)\leq D_3n^{-\min({2\gamma}, 1)}(1+Ind(\gamma\geq \frac{1}{2})\log(n)).
\end{equation*}
}

Consequently,
\begin{equation*}
    \max\limits_{1\leq k<n} c_{k, n}(\gamma)|C_{k3}|=o_P(1), n\to\infty, 
\end{equation*}
and the term $|C_{k3}|$ does not affect the limiting distribution.
The fact that $C_{k4}$ does not affect the limiting behavior can be shown analogously.

Furthermore, note that

{
\begin{align*}
Var(C_{k1})=O\Big(2\Big(\frac{n}{k(n-k)}\Big)^2E\Tilde{q}^2_w(X_1, X_2)\Big).
\end{align*}
}

Since $\sqrt{Var(C_{k1})}$ and $C_{k2}$ are of the same order, it remains to determine the limiting distribution of $C_{k1}$ in Step (II). 

\underline{Step (II):} 
Consider the auxiliary statistic

{
\begin{equation*}
    S_k(\Tilde{q}_w)=\sum\limits_{1\leq i < j \leq k} \Tilde{q}_w(X_i, X_j), \; k = 1, 2, \dots, n.
\end{equation*}

}


Applying the H\'ajek-R\'enyi inequality yields the following bound:

{

\begin{align*} &P\Big(\max\limits_{1\leq k < n} n\Big(\frac{k(n-k)}{n^2}\Big)^{\gamma+1} \frac{1}{k^2}|S_k(\Tilde{q}_w)|\geq A\Big)
   \leq \frac{D_0}{A^2}E\Tilde{q}^2_w(X_1, X_2),
\end{align*}
}

for some positive constant $D_0$. Repeating the computation for the function 
{
$\Tilde{S}_k(\Tilde{q}_w)=\sum\limits_{k+1\leq i<j\leq  n} \Tilde{q}_w(X_i, X_j)$,
}
and applying the H\'ajek-R\'enyi inequality twice gives

{
\begin{equation*}
    P\Big(\max\limits_{1 \leq k < n} c_{k, n}(\gamma)|C_{k1}|\geq A\Big)\leq \frac{D_1}{A^2} E\Tilde{q}^2_w(X_1, X_2), 
\end{equation*}
}
for some positive constant $D_1$. This inequality remains valid if $\Tilde{q}_w$ is replaced by any function satisfying \eqref{nejednakosti}. Therefore, we apply it to the function
{ $\Tilde{q}_w-\Tilde{q}_{wK}$}, where $\Tilde{q}_{wK}$ is defined by
\begin{equation*}
    \Tilde{q}_{wK}(x, y)=\sum\limits_{s=1}^K \lambda^w_s f^w_s(x)f^w_s(y),
\end{equation*}

where { $(\lambda^w_s, f^w_s)$} are the sorted eigenvalues and eigenfunctions defined in \eqref{eigenvalues}, and $K$ is an arbitrary natural number. 

{Moreover, by using \eqref{eigen} and applying the H\'ajek-R\'enyi inequality once more, we obtain
\begin{equation}\label{nejednakostC}
    P\Big(\max\limits_{1\leq k < n} c_{k, n}(\gamma) |C_{k1}- C_{k1}(K)|\geq A\Big)\leq \frac{D_5}{A^2}E(\Tilde{q}_w(X_1, X_2)-\Tilde{q}_{wK}(X_1, X_2))^2=
    \frac{D_5}{A^2}\sum\limits_{j=K+1}^\infty (\lambda_j^w)^{2},
\end{equation}
where $D_5$ is a positive constant, $n\geq 2$, $K\in \mathbb{N}$, and $C_{k1}(K)$ denotes $C_{k1}$ with $\Tilde{q}$ replaced by $\Tilde{q}_{wK}$. Therefore,
\begin{align*}
    \frac{k(n-k)}{n}C_{k1}(K)&=\sum\limits_{s=1}^K\lambda^w_s\Big(\frac{n}{k(n-k)}\Big(\sum\limits_{j=1}^k f^w_s(X_j)-\frac{k}{n}\sum\limits_{j=1}^n f^w_s(X_j)\Big)^2\\& - \frac{1}{n}\Big(\frac{n-k}{k}\sum\limits_{j=1}^k (f^{w}_s(X_j))^2+\frac{k}{n-k}\sum\limits_{j=k+1}^n (f^w_s(X_j))^2\Big)\Big). 
\end{align*}}
Using Donsker's theorem \cite{billingsley2013convergence}, we obtain
\begin{equation*}
    \Big\{\frac{1}{\sqrt{n}}\sum\limits_{i=1}^{[nt]} \mathbf{f_K}^w(X_i), t\in(0,1)\Big\}\xrightarrow[\;n\to\infty\;]{D((0,1),\mathbb{R}^K)} {\mathbf{W_K}^w(t), t\in(0,1)},
\end{equation*}
where $\mathbf{W_K}^w=(W_1^w, \dots, W_K^w)$ is a $K$-dimensional Wiener process with independent components in $\mathbb{R}^K$, and $\mathbf{f_K}^w=(f^w_1, \dots, f^w_K)$ is the vector of eigenfunctions up to index $K$. Therefore, the asymptotic distribution of $\max\limits_{1\leq k < n} \Big(\frac{k(n-k)}{n^2}\Big)^{\gamma}C_{k1}(K)$ is given by
\begin{equation*}
\max\limits_{1\leq k < n} \Big(\frac{k(n-k)}{n^2}\Big)^{\gamma}{\frac{k(n-k)}{n}}\sum\limits_{s=1}^K\lambda_s^w\Big(\frac{n^2}{k(n-k) }(W_s^w(k/n)-\frac{k}{n}W_s^w(1))^2-1\Big).
\end{equation*}

\underline{Step (III):} Note that the zero-mean random processes $\lambda_s^w\Big(\frac{n^2}{k(n-k) }(W^w_s(k/n)-\frac{k}{n}W^w_s(1))^2-1\Big)$ are independent for $s=K+1, K+2, \dots$.  
Furthermore, since their variances are finite, we can apply the H\'ajek-R\'enyi inequality to obtain
\begin{align*}
     P\Big(\max\limits_{1\leq k < n} \Big(\frac{k(n-k)}{n^2}\Big)^{\gamma}\Big |\!\!\sum\limits_{s=K+1}^\infty \!\!\lambda_s^w\Big(\frac{n^2}{k(n-k) }(W_s^w(k/n)-\frac{k}{n}W_s^w(1))^2-1\Big) \Big |\geq A\Big)
   \leq \frac{D_5}{A^2}\!\!\sum\limits_{s=K+1}^\infty\!\! (\lambda_s^w)^2.
\end{align*}

From \eqref{nejednakostC}, we see that, for sufficiently large $K$, the asymptotic distributions of appropriately scaled $C_{k1}$ and $C_{k1}(K)$ are arbitrarily close. Consequently, by letting $K$ tend to infinity, we conclude that the asymptotic distribution of $\max\limits_{1\leq k < n}\Big(\frac{k(n-k)}{n^2}\Big)^{\gamma} C_{k1}$ coincides with that of
\begin{equation*}
   \max\limits_{t\in (0, 1)} \Big(t(1-t)\Big)^{\gamma}\sum\limits_{s=1}^\infty \lambda_s^w\Big(\frac{(W^w_s(t)-tW^w_s(1))^2}{t(1-t)}-1\Big).
\end{equation*}
This completes the proof.
\end{proof}
\end{document}